\newcommand\gabriele[2][]{{\color{black}\todo[color=cyan!30,#1]{{\bf GP:} #2}}\ignorespaces}
\newcommand\angelo[2][]{{\color{black}\todo[color=red!30,#1]{{\bf AM:} #2}}\ignorespaces}
\newtheorem*{conjecture}{Conjecture}
\RenewDocumentCommand{\diamond}{}{\Diamond}
\RenewDocumentCommand{\square}{}{\Box}
\RenewDocumentCommand{\bigcirc}{}{\heartsuit}
\NewDocumentCommand{\logic}{m m m m}{
	\ensuremath{\mathrm{#1}%
	            \IfBlankF{#2}{^{#2}}%
	            \IfBlankF{#3}{_{#3}}%
	            \IfBlankF{#4}{#4}}\xspace}
\NewDocumentCommand{\FO}{O{}}{\logic{FO}{}{}{\IfBlankF{#1}{[#1]}}}
\NewDocumentCommand{\CTL}{s D(){} O{}}{\IfBooleanTF{#1}{%
	\logic{CTL}{\star\IfBlankF{#2}{,#2}}{#3}{}}{%
	\logic{CTL}{#2}{#3}{}}}
\NewDocumentCommand{\cCTL}{s D(){} O{}}{\IfBooleanTF{#1}{%
	\logic{cCTL}{\star\IfBlankF{#2}{,#2}}{#3}{}}{%
	\logic{cCTL}{#2}{#3}{}}}
\NewDocumentCommand{\ECTL}{s D(){} O{}}{\IfBooleanTF{#1}{%
	\logic{ECTL}{\star\IfBlankF{#2}{,#2}}{#3}{}}{%
	\logic{ECTL}{#2}{#3}{}}}
\NewDocumentCommand{\cCTLp}{s D(){} O{}}{\IfBooleanTF{#1}{
	\ensuremath{\cCTL*(p\IfBlankF{#2}{,#2})[#3]}\xspace}{%
	\ensuremath{\cCTL(p\IfBlankF{#2}{,#2})[#3]}\xspace}}
\NewDocumentCommand{\polcCTL}{s D(){} O{}}{\IfBooleanTF{#1}{%
	\ensuremath{\cCTL*(#2)[\pm\IfBlankF{#3}{,#3}]}\xspace}{%
	\ensuremath{\cCTL(#2)[\pm\IfBlankF{#3}{,#3}]}\xspace}}
\NewDocumentCommand{\polcCTLp}{s D(){} O{}}{\IfBooleanTF{#1}{
	\ensuremath{\polcCTL*(p\IfBlankF{#2}{,#2})[#3]}\xspace}{%
	\ensuremath{\polcCTL(p\IfBlankF{#2}{,#2})[#3]}\xspace}}
\NewDocumentCommand{\LTL}{D(){} O{}}{\logic{LTL}{#1}{#2}{}}
\NewDocumentCommand{\ELTL}{D(){} O{}}{\logic{ELTL}{#1}{#2}{}}
\NewDocumentCommand{\SafeLTL}{D(){} O{}}{\logic{SafeLTL}{#1}{#2}{}}
\NewDocumentCommand{\CosafeLTL}{D(){} O{}}{\logic{coSafeLTL}{#1}{#2}{}}
\DeclareRobustCommand{\ECTL}
  {{\logic{E}{}{}{}}\CTL}
\DeclareRobustCommandx{\OGPL}[3][1=, 2=, 3=]
  {\PL[#1][#2][1g\arglef{,}{#3}]}
\DeclareRobustCommandx{\CGPL}[3][1=, 2=, 3=]
  {\PL[#1][#2][cg\arglef{,}{#3}]}
\DeclareRobustCommandx{\DGPL}[3][1=, 2=, 3=]
  {\PL[#1][#2][dg\arglef{,}{#3}]}
\DeclareRobustCommandx{\AGPL}[3][1=, 2=, 3=]
  {\PL[#1][#2][ag\arglef{,}{#3}]}
\DeclareRobustCommandx{\EGPL}[3][1=, 2=, 3=]
  {\PL[#1][#2][eg\arglef{,}{#3}]}
\DeclareRobustCommandx{\BGPL}[3][1=, 2=, 3=]
  {\PL[#1][#2][bg\arglef{,}{#3}]}
\DeclareRobustCommandx{\XGPL}[3][1=, 2=, 3=]
  {\PL[#1][#2][xg\arglef{,}{#3}]}
\DeclareRobustCommandx{\OGSL}[3][1=, 2=, 3=]
  {\SL[#1][#2][1g\arglef{,}{#3}]}
\DeclareRobustCommandx{\CGSL}[3][1=, 2=, 3=]
  {\SL[#1][#2][cg\arglef{,}{#3}]}
\DeclareRobustCommandx{\DGSL}[3][1=, 2=, 3=]
  {\SL[#1][#2][dg\arglef{,}{#3}]}
\DeclareRobustCommandx{\AGSL}[3][1=, 2=, 3=]
  {\SL[#1][#2][ag\arglef{,}{#3}]}
\DeclareRobustCommandx{\EGSL}[3][1=, 2=, 3=]
  {\SL[#1][#2][eg\arglef{,}{#3}]}
\DeclareRobustCommandx{\BGSL}[3][1=, 2=, 3=]
  {\SL[#1][#2][bg\arglef{,}{#3}]}
\DeclareRobustCommandx{\NGSL}[3][1=, 2=, 3=]
  {\SL[#1][#2][ng\arglef{,}{#3}]}
\DeclareRobustCommandx{\XGSL}[3][1=, 2=, 3=]
  {\SL[#1][#2][xg\arglef{,}{#3}]}
\title{An Automaton-based Characterisation of \\ First-Order Logic over Infinite Trees}
\author{%
  Massimo Benerecetti%
  \institute{Università degli Studi di Napoli Federico II}%
  \email{massimo.benerecetti@unina.it}%
\and
  Dario Della Monica%
  \institute{Università degli Studi di Udine}%
  \email{dario.dellamonica@uniud.it}%
\and
  Angelo Matteo%
  \institute{Università degli Studi di Udine}%
  \email{matteo.angelo@spes.uniud.it}%
\and
  Fabio Mogavero%
  \institute{Università degli Studi di Napoli Federico II}%
  \email{fabio.mogavero@unina.it}%
\and
    Gabriele Puppis%
  \institute{Università degli Studi di Udine}%
  \email{gabriele.puppis@uniud.it}%
}
\begin{document}
\nolinenumbers
\maketitle



\begin{abstract}
In this paper, we study First Order Logic (\FO) 
over (unordered) infinite trees and its
connection with branching-time temporal logics.
More specifically, we provide an automata-theoretic characterisation of \FO
interpreted over infinite trees.
To this end, two different classes of hesitant tree automata are introduced and
proved to capture precisely the expressive power of two branching time temporal 
logics, denoted \polcCTLp and \cCTL*[f], which are, respectively, a restricted version of
counting \CTL with past and counting \CTL* over finite paths, both of which
have been previously shown equivalent to \FO over infinite trees.
The two automata characterisations naturally lead to normal forms for the two
temporal logics, and highlight the fact that \FO can only express properties of the
tree branches which are either safety or co-safety in nature.
\end{abstract}



\section{Introduction}

\emph{Characterisation theorems}~\cite{Flu85} are powerful model-theoretic tools
that offer a principled approach to understanding 
the intrinsic features of formal systems.
They allow us to mark the \emph{expressive boundaries} of specification
languages, compare these formalisms \wrt their \emph{descriptive power} on
specific classes of models, and design new languages starting from a given set
of requirements, in the spirit of \emph{Lindstr\"om-style theorems}~\cite{Lin69}
(\eg, based on maximality principles).
They also play a central role in \emph{definability theory}, guiding the
identification of expressive fragments and meaningful extensions of known
logics, thus supporting the selection of suitable languages for the
specification of the correct behaviour of systems in verification and synthesis
tasks.
\\\indent
A foundational distinction exists between \emph{linear-time} and
\emph{branching-time} languages~\cite{MP92,MP95}.
The former capture properties of computations viewed as totally-ordered sets of
events, while the latter account for the branching structure inherent in
concurrent and nondeterministic system behaviours.
\\\indent
The linear-time case, where models are isomorphic to (finite or infinite)
\emph{words}, is by now well understood.
A rich and intertwined network of equivalences connects \emph{predicate logics}
over linear orders with \emph{temporal logics}, such as \LTL~\cite{Pnu77,Pnu81}
and \ELTL~\cite{Wol83}, with \emph{star-free}~\cite{Sch65,PP86} and
\emph{$\omega$-regular}~\cite{Buc62} languages, and with automata-theoretic
models, including \emph{finite}~\cite{Ner58,RS59} and
\emph{B\"uchi}~\cite{Elg61,Buc62,Buc66} automata.
These connections provide deep insights into the structure of definable
properties and lead to optimal decision procedures across different
representations.
\\\indent
By contrast, the branching-time setting remains more fragmented.
Even for \emph{First-Order Logic} (\FO) interpreted over (finite or infinite)
trees many fundamental definability questions remain unsettled.
A longstanding open problem posed by Thomas in the 1980s~\cite{Tho84} asks
whether it is decidable if a given regular-tree language is definable in \FO.
This question has been studied under various combinations of tree types
(\emph{ranked/unranked}, \emph{ordered/unordered}) and interpreted vocabularies
(\eg, including only \emph{child}, only \emph{ancestor}, or both relations).
Aside from the positive result for \FO over finite trees with the child
relation~\cite{BS09}, the problem remains open in all other settings.
Efforts to resolve this question have mainly followed \emph{algebraic
approaches}~\cite{Boj21}, inspired by their success in the word case (most
notably the Sch\"utzenberger theorem on star-free languages~\cite{Sch65}).
These approaches rely on the compositionality and structural insights provided
by syntactic algebras.
Despite significant progress, they have provided only partial results, mostly
for classes of finite trees~\cite{Pot95,Boj04,EW10,BSW12} or topologically simple infinite
trees~\cite{BI09, BIS13}.
An alternative and often complementary line of work seeks direct
characterisations of \FO-definable tree languages via automata.
This route, highly successful in the linear-time case, has also led to fruitful
results in the branching-time setting, including a correspondence~\cite{JW96}
between \emph{Monadic Second-Order Logic} (\MSO)~\cite{Rab69}, \emph{Parity Tree
Automata}~\cite{Mos84,EJ91}, and the \emph{Modal \MC}~\cite{Koz83}.
More recently~\cite{BBMP23}, the landscape has expanded to include the
expressive equivalence of \emph{Monadic Chain/Path Logics}
(\MCL/\MPL)~\cite{Tho84,Tho87,HT87}, their temporal \emph{Computation Tree
Logic} counterparts (\ECTL*/\CTL*)~\cite{VW83,EH83,EH85}, and variants of
\emph{Hesitant Tree Automata} (HTA)~\cite{KVW00}.
\\\indent
In this work, we continue 
this line of development, by providing the
first, to the best of our knowledge, complete automata-theoretic
characterisation of first-order logic with the descendant 
relation of unranked unordered infinite trees.
Our approach builds on previous results for two branching-time temporal logics,
namely a \emph{fragment of Computation Tree Logic with past and counting}, 
denoted \emph{\polcCTLp},
and the \emph{Full Computation Tree Logic with counting and finite path quantification}, 
denoted \emph{\cCTL*[f]}. In~\cite{Sch92a,BBMP23} these logics were shown to be expressively 
equivalent to \FO when interpreted on unordered infinite trees.
For 
these two logics, we introduce 
corresponding variants of hesitant graded automata, 
called \emph{Two-Way Hesitant Linear Tree Automata} (2HLGT) and
\emph{counter-free Hesitant Weak Tree Automata} (HWGT$_{cf}$), 
and prove that they capture precisely the expressive power of the
considered logics, and therefore of \FO as well. This establishes
a full mutual equivalence between logics and automata.
These characterisations also uncover a \emph{polarised normal form} for both
temporal logics, revealing a noteworthy semantic feature of \FO over infinite
trees: formulas that quantify existentially on branches can only express
\emph{co-safety} properties, while those quantifying universally correspond
exclusively to \emph{safety} properties.
This observation aligns with earlier findings~\cite{CV14} that relate 
fragments of the modal \MC, variants of \emph{Propositional Dynamic Logic}
(\PDL)~\cite{FL79}, and \emph{Weak Monadic Chain Logic} (\WMCL).

\noindent{\bf Other related work.}
In earlier work, Boja{\'n}czyk~\cite{Boj04} showed that, over finite binary
trees, \FO with child and ancestor relations is equivalent to a
\emph{cascade product} of so-called \emph{aperiodic wordsum automata}.
While related in spirit, this result targets a different logic and a different
class of structures.
More recently, Ford~\cite{For19} focused on the same tree structures that
are considered here, and introduced the class of 
\emph{antisymmetric path parity automata}, which are shown to be no more 
expressive than \FO.
However, that work does not provide a translation from \FO to automata, 
leaving the equivalence question open.

\noindent{\bf Organization.}
The paper is organised as follows. In Section 2 we give the necessary
preliminaries on words, trees, first order and temporal logics. In Section 3 we
recall the two branching-time temporal logics equivalent to \FO and investigate
their most interesting properties. Section 4 is devoted to the introduction of
the class of \emph{graded tree automata} and of its weak and hesitant
restrictions. Sections 5 and 6 are the main sections, in which we prove the
equivalence of two classes of automata with the two temporal logics discussed in
Section 3, while in Section 7 we present the normal forms obtained for them.
Finally, Section 8 discusses the results.

\section{Preliminaries}

\paragraph{Words and trees.}
Given a finite alphabet $\Sigma$, a \emph{finite} (resp., \emph{infinite}) word over $\Sigma$ is a finite (resp. infinite) sequence 
%
%
of letters in $\Sigma$.
%
%
A \emph{word language} over $\Sigma$ is a set of words over
$\Sigma$.

We consider unranked and unordered infinite trees with arbitrary finite
branching. A \emph{tree} is a connected acyclic graph $T = (V_T, E_T)$, where
$V_T$ is its set of \emph{nodes} and $E_T \subseteq V_T \times V_T$ is its
\emph{transition} relation; $E_T$ is total, that is, for every $n \in V_T$,
$(n,n') \in E_T$ for some $n'$. 
We denote the reflexive and transitive closure of $E_T$ by $E_T^*$.
We denote the root of a tree $T$ by $\epsilon_T$. Given two nodes $v,w$ in $T$,
we say that $v$ is the \emph{parent} of $w$ if $(v,w) \in E_T$ and that $v$ is a
\emph{child} or \emph{successor} of $w$ if $(w,v) \in E_T$. A tree is \emph{unranked} if there is no restriction (apart from finiteness) on the number of children a node might have. If two nodes have the same parent we say they are \emph{siblings}. A tree is \emph{unordered} if the order of the siblings is irrelevant. Moreover, we say
that $v$ is an \emph{ancestor} of $w$ if ($v,w) \in E_T^*$, and that $v$ is a
\emph{descendant} of $w$ if ($w,v) \in E_T^*$. 

The \emph{subtree} of $T$ rooted at a node $w$ is the tree consisting of all the
descendants of $w$. A \emph{path} $\pi$ of a tree $T$ is a finite or infinite
sequence of nodes of $T$, whose first element is the root of $T$ and where every
element but the first one is a child of its predecessor in the sequence.
Given a path $\pi = n_0 n_1 \ldots$, we write $\pi(i)$ to refer to $n_i$.

Given a finite alphabet $\Sigma$, a $\Sigma$-labelled tree ($\Sigma$-tree for
short) is a pair $\mathcal{T} = (T, \tau$) such that $T$ is a tree and $\tau :
V_T \rightarrow \Sigma$ is a labelling function assigning to each node in $T$ a
letter of $\Sigma$.
A \emph{tree language} over $\Sigma$ is a set of $\Sigma$-trees. We
denote by $\mathcal{T}_\Sigma$ the language of all $\Sigma$-trees.


\medskip

\noindent{\bf First-Order logic. }
We introduce \emph{monadic First-Order logic on trees with the ancestor relation} (\FO for short). 
Let $\mathrm{AP}$ be a set of atomic propositions. 
Formulas of \FO are generated by the following grammar, 
where $p$ ranges over $\mathrm{AP}$ and $x, x_0,x_1,\dots$
are first-order variables from a set $\mathrm{Var}$:
\begin{align*}
\varphi ::=\;&
  (x_0 = x_1)
  \;\bigm|\;
  (x_0 \le x_1)
  \;\bigm|\;
  p(x)
  \;\bigm|\;
  \neg \varphi
  \;\bigm|\;
  \varphi \lor \varphi
  \;\bigm|\;
  \exists x. \varphi
 \end{align*}
The usual abbreviations $\top, \bot, \land, \rightarrow, \leftrightarrow, \forall$
are allowed.
A variable is \emph{free} if it is  not bound to any quantifier.
A formula without free variables is called a \emph{sentence}.

An \FO formula $\varphi$ is interpreted over a structure
$\mathcal{M} = (\mathcal{T},\zeta$), where $\mathcal{T} = (T, \tau)$ is a
$\Sigma$-tree,%
\footnote{It is sometimes convenient to identify $\Sigma$ with $\mathrm{AP}$.
  Since here we are not concerned with succinctness of formulas or complexity of
  satisfiability problems, this detail is immaterial.}
with $\Sigma=2^{\mathrm{AP}}$, and $\zeta: \mathrm{Var} \rightarrow V_T$
is a function associating a node of $T$ with each variable.
The relational symbols $=$ and $\le$ are interpreted, respectively, by the
identity on the nodes and by the reflexive and transitive closure $E_T^*$ of the
transition relation of $T$; the rest of the semantics is standard.
Note that we can equally interpret \FO formulas on words, seen as trees whose
nodes have at most one successor.

%

\medskip

\noindent{\bf Temporal logics.}
We now introduce variants of the branching-time temporal logics \CTL and \CTL*.
In doing so, we adopt a suggestive notation that annotates a base logic, e.g.~$\CTL$,
with superscripts and subscripts, denoting enhancements and restrictions, respectively.
The most expressive logical formalism in this setting is denoted \cCTL*(p), and can be 
seen as the extension of \CTL* with counting modalities (e.g., for counting the number
of successors satisfying a certain property) and past operators 
(for navigating the tree along ancestors). 
Formulas of \cCTL*(p) are generated by the following grammar, 
where $n$ ranges over the natural numbers and $p$ ranges over 
a set $\mathrm{AP}$ of atomic propositions:
\begin{align*}
\varphi &::=
  \underbrace{{\D^n \varphi}}_{\scriptstyle\hspace{-8mm}\text{counting operators}\hspace{-8mm}}
  \;\bigm|\;
  p
  \;\bigm|\;
  \neg \varphi
  \;\bigm|\;
  \varphi \lor \varphi
  \;\bigm|\;
  {\E\varphi}
  \;\bigm|\;
  {\X \varphi}
  \;\bigm|\;
  {\varphi \U \varphi}
  \;\bigm|\;
  \underbrace{{\Y \varphi}
  \;\bigm|\;
  {\varphi \S \varphi}}_{\scriptstyle\text{past operators}}
\end{align*}
Some abbreviations are
$\F\varphi = \top \U \varphi$, 
$\G\varphi = \neg \F \neg \varphi$, 
$\varphi \R \varphi' = \neg(\neg\varphi \U \neg\varphi')$, 
$\varphi \W \varphi' = \G \varphi \lor (\varphi \U \varphi')$, 
$\A\varphi = \neg \E\neg\varphi$.

Formulas of \cCTL*(p) are evaluated with respect to a $\Sigma$-tree $\mathcal{T} = (T, \tau)$, where $\Sigma = 2^{\mathrm{AP}}$, 
an infinite path $\pi$ of $T$, and a position $i \in \mathbb N$ along this path.
The satisfaction relation $\models$ is defined by induction over \cCTL*(p) formulas
as follows (we omit the obvious Boolean cases):
\begin{compactitem}
    \item $\mathcal{T}, \pi, i \models \D^n\psi$ \iff 
          there are $n$ 
          distinct infinite paths $\pi_1,...,\pi_n$ such that
          \begin{inparaenum}[(i)]
          \item they coincide with $\pi$ up to position $i$,
          \item they all differ at position $i+1$, and
          \item they satisfy $\mathcal{T}, \pi_j, i+1 \models \varphi$
            for all $1 \leq j \leq n$,
          \end{inparaenum}
    \item $\mathcal{T}, \pi, i \models p$ \iff $p \in \tau(\pi(i))$,
    \item $\mathcal{T}, \pi, i \models \E\psi$ \iff there is an infinite path $\pi'$ that 
          coincides with $\pi$ up to position $i$ and satisfies $\mathcal{T}, \pi', i \models \psi$,
    \item $\mathcal{T}, \pi, i \models \X \psi$ \iff $\mathcal{T}, \pi, i +1 \models \psi$,
    \item $\mathcal{T}, \pi, i \models \psi \U \psi'$ \iff there is $j \geq i$ such that 
          $\mathcal{T}, \pi, j \models \psi'$ and $\mathcal{T}, \pi, k \models \psi$ for all $i \leq k < j$,
    \item $\mathcal{T}, \pi, i \models \Y \psi$ \iff $i > 0$ and $\mathcal{T}, \pi, i -1 \models \psi$,
    \item $\mathcal{T}, \pi, i \models \psi \S \psi'$ \iff there is $j \leq i$ such that 
          $\mathcal{T}, \pi, j \models \psi'$ and $\mathcal{T}, \pi, k \models \psi$ 
          for all $i \geq k > j$.
\end{compactitem}
We say that two formulas $\psi$ and $\psi'$ are \emph{equivalent}%
\footnote{This is a standard notion called \emph{initial equivalence}, sometimes
          contrasted with the stronger equivalence that evaluates formulas at arbitrary
          nodes. In this paper, we are only concerned with initial equivalence,
          and therefore we opt for saying just ``equivalent'' without further specification.} 
if for every $\Sigma$-tree $\mathcal{T} = (T, \tau)$ and every infinite path $\pi$ of $T$, we have that 
$\mathcal{T}, \pi, 0 \models \psi$ iff $\mathcal{T}, \pi, 0 \models \psi'$.
We say that a $\Sigma$-tree $\mathcal{T}$ is a \emph{model} of a formula $\varphi$, 
denoted $\mathcal{T} \models \psi$, if $\mathcal{T}, \pi, 0 \models \psi$ for every infinite path $\pi$. 
A formula is \emph{valid} if every $\Sigma$-tree is a model of it.  

We denote by $\mathcal{L}(\psi)$ the language of models of a given formula $\psi$, and we shall
consider classes of languages defined by formulas in certain fragments of \cCTL*(p). 
In particular, we say that two such fragments are \emph{expressively equivalent} if they define 
the same class of languages, and that one fragment is \emph{strictly less expressive} than 
another one when the class of languages defined by the former is strictly contained in the 
class of languages defined by the latter. 

Let us now discuss the main fragments of $\cCTL*(p)$.
A first fragment is \CTL*(p), which is \cCTL*(p) devoid of the counting
operators $\D^n$.
In its turn, $\CTL*(p)$ without the past operators corresponds to the classical
\CTL* logic. 
Fragments \CTL(p) and \CTL are obtained by applying the same restrictions
(analogous removals of counting and past operators) to $\cCTL(p)$, the fragment
of \cCTL*(p) where path quantifiers and future temporal operators must be paired
together, as indicated by the following grammar:
%
\begin{align*}
  &
  \phantom{
    ::=\;
    }
  ~\overbrace{\phantom{
    {\D^n\varphi
  }
  \;\bigm|\; }
  ~\overbrace{\phantom{
  p
  \;\bigm|\; 
  \neg \varphi
  \;\bigm|\; 
  \varphi \lor \varphi
  \;\bigm|\; 
  {\E\,\X\,\varphi}
  \;\bigm|\; 
  {\E\,( \varphi \,\U\, \varphi )}
  \;\bigm|\; 
  {\A\,( \varphi \,\U\, \varphi )}
  \;\bigm|\; 
  {\Y \varphi}
  \;\bigm|\; 
  {\varphi \,\S\, \varphi}
  }}^{\text{\CTL(p)}}}^{\cCTL(p)}
\\[-3.5ex]
  \varphi &::=\;
  \underbrace{
  {\D^n\varphi}
  \;\bigm|\;
  \underbrace{
  p
  \;\bigm|\; 
  \neg \varphi
  \;\bigm|\; 
  \varphi \lor \varphi
  \;\bigm|\; 
  {\E\,\X\,\varphi}
  \;\bigm|\; 
  {\E\,( \varphi \,\U\, \varphi )}
  \;\bigm|\; 
  {\E\,( \varphi \,\R\, \varphi )}
  }_{\text{\CTL}}
  }_{\text{\cCTL}}
  \;\bigm|\; 
  {\Y \varphi}
  \;\bigm|\; 
  {\varphi \,\S\, \varphi} 
\end{align*}

\noindent
Finally, \cCTL* and \cCTL are, respectively, \cCTL*(p) and \cCTL(p) without past
operators.
It is well known that \CTL is strictly less expressive than \CTL(p) \cite[Theorem 4.1, 4.2]{LS95},
and consequently \cCTL is strictly less expressive than \cCTL(p).

Finally, we have the Linear-time Temporal Logic with and without past operators,
abbreviated, respectively, \LTL(p) and \LTL for short. These logics can be seen
as the fragments of \CTL*(p) and \CTL* that are evaluated over unary trees, that
is words (notice that in this setting the path quantifier $\E$ becomes
pointless).
It is worth recalling that \LTL and \LTL(p) are expressively equivalent
\cite[Theorem 2.2]{GPSS80}, and they correspond to \FO when the latter is 
evaluated on words as well:

\begin{proposition}{\cite[Theorem 1.1]{DG08a}} \label{LTL and LTL+p are equivalent}
\LTL and \LTL(p) are equivalent to \FO over finite and infinite words.
\end{proposition}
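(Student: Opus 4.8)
The plan is to prove the two inclusions separately and glue them with the separation theorem. Since \LTL is syntactically a fragment of \LTL(p), it suffices to establish (a)~$\LTL(p) \subseteq \FO$ and (b)~$\FO \subseteq \LTL(p)$ over both finite and infinite words: together with the already-recalled equivalence $\LTL \equiv \LTL(p)$~\cite[Theorem 2.2]{GPSS80} (the separation theorem), this closes the chain $\LTL \equiv \LTL(p) \equiv \FO$. Throughout, I would identify a word $w$ with the linear order on its positions equipped with the monadic predicates induced by $\tau$, so that an \FO \emph{sentence} defines a word language in the obvious way, and I would use the standard correspondence between a temporal formula evaluated at position $0$ and an \FO formula with one free variable interpreted at the first position.

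Direction~(a) is the routine one, and I would carry it out by structural induction, translating every \LTL(p) formula $\psi$ into an \FO formula $\widehat{\psi}(x)$ with a single free variable $x$ such that $w, i \models \psi$ iff $w \models \widehat{\psi}(i)$ for every word $w$ and position $i$. Atomic propositions map to unary predicates, Booleans are homomorphic, and the temporal operators become relativised quantifiers over positions: $\X$ and $\Y$ refer to the unique immediate successor, resp.\ predecessor, of $x$; the formula $\psi \U \psi'$ becomes $\exists y\,\bigl(x \le y \wedge \widehat{\psi'}(y) \wedge \forall z\,(x \le z < y \to \widehat{\psi}(z))\bigr)$, and $\psi \S \psi'$ is its mirror image with $y \le x$. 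The language of $\psi$ is then defined by $\exists x\,\bigl(\mathrm{first}(x) \wedge \widehat{\psi}(x)\bigr)$, where $\mathrm{first}(x)$ asserts that $x$ is the minimum of the order.

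Direction~(b) is Kamp's theorem and is the substantive part. I would prove, by induction on the quantifier rank $k$, the strengthened claim that every \FO formula $\varphi(x)$ in a single free variable $x$ is equivalent over words to an \LTL(p) formula evaluated at $x$; the defining sentence for a language is recovered by relativising $x$ to the first position as above. The engine is a Feferman--Vaught / composition argument (provable, e.g., by an Ehrenfeucht--Fra\"iss\'e game): the rank-$k$ \FO type of a pointed word $(w, i)$ is a function of the rank-$k$ type of its strict prefix $w_{<i}$ and the rank-$k$ type of its suffix $w_{\ge i}$, and these types range over a finite set, each member being itself definable by an \FO formula of rank $\le k$. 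Granting this, a formula of rank $k+1$ is a Boolean combination of formulas $\exists y\, \vartheta(x,y)$ with $\vartheta$ of rank $k$; splitting the witness $y$ according to whether $y < x$, $y = x$, or $y > x$ and applying the induction hypothesis to the relativised subformulas turns the ``future'' witness into a $\U$-formula whose right-hand side is a suffix-type predicate, and the ``past'' witness into the mirror $\S$-formula, the case $y = x$ being immediate. The availability of both past and future operators is exactly what keeps this translation symmetric and local.

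The step I expect to be most delicate is making the composition machinery actually close the induction: one must verify that the finitely many prefix-type and suffix-type predicates are themselves expressible in \LTL(p) — which is why the induction hypothesis must be stated for formulas with a free variable rather than only for sentences — and one must handle the infinite-word case, where the suffix type has to record $\omega$-behaviour (such as which propositions occur infinitely often) on top of a finite initial segment of the suffix. I would defer the bookkeeping for this to \cite[Theorem 1.1]{DG08a}. Once $\FO \subseteq \LTL(p)$ is established, the separation theorem~\cite[Theorem 2.2]{GPSS80} immediately yields $\FO \subseteq \LTL$, and since $\LTL \subseteq \LTL(p) \subseteq \FO$ is trivial from direction~(a), the full statement follows.
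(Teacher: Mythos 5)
The paper does not prove this statement: it is imported verbatim as a citation of \cite[Theorem 1.1]{DG08a} (Kamp's theorem together with the Gabbay--Pnueli--Shelah--Stavi separation result), so there is no in-paper argument to compare against. Your outline reproduces the standard proof from that literature --- the easy structural-induction embedding of \LTL(p) into \FO, the composition/Ehrenfeucht--Fra\"iss\'e argument for $\FO \subseteq \LTL(p)$, and separation to drop the past operators --- and it correctly identifies the delicate points (closing the induction on prefix/suffix types and handling $\omega$-behaviour), so it is essentially the same approach as the cited source, with the hard bookkeeping appropriately deferred to it.
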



\medskip

\noindent{\bf Word automata.}
A \emph{Nondeterministic Parity word Automaton} (\emph{NPA}) is a tuple 
$\mathcal{A} = \langle Q, \Sigma, \delta, q_I, \Omega\rangle$, 
where $Q$ is a finite set of states, $\Sigma$ is a finite alphabet, 
$\delta:  Q\times\Sigma\rightarrow 2^Q$
is a 
transition relation (represented as a function towards the powerset of $Q$), 
$q_I\in Q$ is an initial state, and
$\Omega: Q \rightarrow \mathbb{N}$ is a \emph{priority function} 
associating a number with each state. An NPA is \emph{deterministic} if $\delta: Q \times \Sigma \rightarrow Q$.

A \emph{path} of $\mathcal{A}$ on a finite (resp., infinite) word $w = a_0 a_1 \dots$
is a finite (resp., infinite) sequence of states $q_0 q_1 \dots$ such that
$q_{i+1}\in\delta(q_{i},a_{i})$ for all positions $i$ in $w$.
Such a path is called a \emph{run} of $\mathcal{A}$ on $w$ if it is infinite and, moreover, 
its first state $q_0$ is the initial state $q_I$ of the automaton.
We denote by $\inf(r)$ the set of states visited infinitely often along a run $r$, 
and say that $r$ is \emph{successful} if the highest priority of the states in $\inf(r)$ is even.
An infinite word $w$ is \emph{accepted} by $\mathcal{A}$ if there is a successful run
of $\mathcal{A}$ on $w$.
The language $\mathcal{L}(\mathcal{A})$ \emph{recognized} by $\mathcal{A}$ is
the set of words accepted by $\mathcal{A}$.
%
%
Two automata are \emph{equivalent} if they recognize the same language.

Variants of NPAs are obtained by changing the acceptance conditions.
One way is to simply turn existential non-determinism into universal non-determinism,
that is, to declare that the automaton accepts a given word whenever \emph{every} run 
on it is successful; we call these models of automata \emph{universal} (e.g., we have
\emph{Universal Parity word Automata}, abbreviated UPA).
Another way is to constrain the co-domain of the priority function to be of
cardinality 2 and to contain both an odd and an even number.
An NPA with this restriction is a \emph{B\"uchi} (resp., \emph{coB\"uchi})
automaton if the largest number in the co-domain $\mathit{cod}(\Omega)$ of
$\Omega$ is even (resp., odd).
Note that such an acceptance condition can be equivalently specified by
the set $F = \{ q \in Q \mid \Omega(q) = \max(\mathit{cod}(\Omega)) \}$, 
whose states are usually called \emph{accepting} or \emph{rejecting}
depending on whether we deal with B\"uchi or coB\"uchi conditions.
Accordingly, a run $r$ is successful for a B\"uchi (resp., coB\"uchi)
condition if $\inf(r) \cap F\neq \emptyset$ (resp., $\inf(r) \cap F = \emptyset$). 



An automaton $\mathcal{A}$ is \emph{counter-free} if the following holds 
for all states $q$, all finite words $\iota$, and all numbers $n$: 
if $\mathcal{A}$ admits a path on $\iota^n$ that starts and ends at 
$q$, then it admits a path also on $\iota$ that starts and ends at $q$. 
By \cite[Theorem 1.1]{DG08a}, counter-free B\"uchi nondeterministic automata and \FO 
capture the same class of word languages.



\section{Temporal Logics and FO}
In this section, we introduce two temporal logics provably equivalent to \FO
over unranked and unordered finitely branching infinite trees. 

\medskip

\noindent{\bf The polarized fragment of 
counting CTL with past operators.}
The first temporal logic we discuss is a syntactic restriction on
\cCTLp, introduced and studied by Schlingloff in \cite{Sch92a}.
More precisely, in \cite[Theorem 4.5]{Sch92a} it is shown that there is a fragment of 
\cCTL(p) that is expressively equivalent to \FO over infinite finitely branching trees.
We call this fragment \emph{Polarized \cCTL(p)}, abbreviated \polcCTL(p). Its formulas
are generated by the following grammar:
\begin{align*}
\varphi ::=\;&
  {\D[][n] \varphi}
  \;\bigm|\;
  p
  \;\bigm|\;
  \neg \varphi
  \;\bigm|\;
  \varphi \lor \varphi
  \;\bigm|\;
  {\E\X \varphi}
  \;\bigm|\;
  {\E(\varphi \U \varphi)}
  \;\bigm|\;
  {\Y \varphi}
  \;\bigm|\;
  {\varphi \S \varphi}
 \end{align*}
Accordingly, we denote by \polcCTL the sub-fragment obtained from \polcCTL(p) by disallowing past operators.

The only difference w.r.t.~\cCTLp and~\cCTL is that $\E (\varphi \R \varphi)$
is not included as primitive in the syntax (and it cannot be restored as an abbreviation). 
Indeed, it is possible to define the following abbreviations: $\A\X \varphi = \neg \E\X \neg \varphi$, $\E\F \varphi = \E(\top \U \varphi)$, $\A\G \varphi = \neg \E\F
\neg \varphi$, $\A (\varphi \R \varphi') = \neg \E (\neg \varphi \U \neg
\varphi')$; however, formulas of the form $\E\G \varphi$, $\E(\varphi \R \varphi')$,
$\A\F\varphi$, $\A(\varphi \U \varphi')$ are not derivable.
The semantic intuition behind this syntax is that formulas that existentially quantify over branches can only express co-safety properties; dually, universal quantifications can only be paired with safety properties. This will become transparent in the following of the paper, but note already that this is similar to what happens in
a fragment of the modal \MC isolated in \cite{CV14}, that
the authors call \emph{completely additive}, and that allows only the interplay of least fixpoint operators and existential modalities, and, dually, of greatest fixpoints and universal modalities.

%
%

Before proceeding, we address two natural issues: whether \polcCTLp is 
strictly less expressive than \cCTLp and whether \polcCTL is strictly
less expressive than \polcCTLp. It turns out that the answer is positive in both
cases, for the following two propositions.
\begin{proposition}{\cite[Theorem 3]{BBMP23}}] \label{CTL+p more expressive than polctlp}
    No \FO formula can express the \CTL formula $\A\F p$.
\end{proposition}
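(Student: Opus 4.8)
The plan is to prove non-definability by an Ehrenfeucht--Fra\"{\i}ss\'{e} (EF) argument: since every \FO sentence has some finite quantifier rank $k$, it suffices to exhibit, for each $k$, two $\Sigma$-trees $\mathcal{T}_k$ and $\mathcal{U}_k$ with $\mathcal{T}_k \models \A\F p$, $\mathcal{U}_k \not\models \A\F p$, and such that Duplicator wins the $k$-round EF game on $(\mathcal{T}_k,\mathcal{U}_k)$ (so they agree on all \FO sentences of rank $\le k$, written $\mathcal{T}_k \equiv_k \mathcal{U}_k$). First I would record the semantic content of $\A\F p$ in a convenient form: $\mathcal{T} \models \A\F p$ exactly when every infinite branch of $\mathcal{T}$ meets $p$, equivalently --- since the trees here are finitely branching, so K\"onig's lemma applies --- when the \emph{$p$-free prefix region} $R(\mathcal{T})$, i.e.\ the set of nodes all of whose proper ancestors are $\neg p$, is \emph{finite}. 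Thus $\A\F p$ is really a finiteness/well-foundedness statement about a region of the tree whose size is not bounded a priori, and the aim is to confront a large-but-finite such region with an infinite one that \FO cannot tell apart.

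Next I would build a family of \emph{nested combs}. Let $G_0$ be a $p$-labelled node carrying an infinite $\neg p$-ray below it. For $j \ge 1$, let $G_j^{\infty}$ consist of an infinite $\neg p$-labelled spine $u_0, u_1, \dots$ with a fresh copy of $G_{j-1}^{\infty}$ attached to every $u_i$, and let $G_j^{(N)}$ be obtained from $G_j^{\infty}$ by cutting the spine after $u_N$ and attaching below $u_N$ a copy of $G_{j-1}^{(N)}$ in place of the missing continuation (so that no node is ever a leaf, as required of trees here). A routine induction on $j$ shows that every branch of $G_j^{(N)}$ meets $p$, hence $G_j^{(N)} \models \A\F p$ and $R(G_j^{(N)})$ is finite, whereas the spine of $G_j^{\infty}$ is a $p$-free branch, so $G_j^{\infty} \not\models \A\F p$. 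Fixing $k$, I would take $\mathcal{U}_k := G^{\infty}_{k+1}$ and $\mathcal{T}_k := G^{(N_k)}_{k+1}$ for a sufficiently large $N_k$.

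The core of the argument --- and the step I expect to be the main obstacle --- is showing that Duplicator wins the $k$-round game on $(\mathcal{T}_k, \mathcal{U}_k)$. A single truncated comb ($j=1$) does not suffice because the truncation point $u_N$ is a detectable ``scar'': its bounded neighbourhood differs from that of an internal spine node, and \FO exposes this with a fixed number of quantifiers no matter how large $N$ is. The point of the nested construction is that this scar is pushed one nesting level deeper than the level at which it is created, so that with $k{+}1$ levels of nesting an adversary with only $k$ rounds can never descend far enough to reach it. Making this precise calls for a compositional Duplicator strategy: an invariant stating that after $j$ rounds every chosen pair of nodes occupies ``matching'' positions in $\equiv_{k-j}$-equivalent sub-combs, where positions along a spine are matched by the classical $(\omega,\le) \equiv_{k-j} (\text{long finite},\le)$ strategy, and where the residual sub-comb on the $\mathcal{T}_k$ side always retains enough nesting depth, relative to the rounds left, to keep shadowing the infinite one on the $\mathcal{U}_k$ side. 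I would organise this around a Feferman--Vaught-style composition lemma for the operation ``attach a fixed gadget to every node of a path'', reducing the game on nested combs to games on linear orders plus games on strictly shallower gadgets, and conclude by induction on the nesting depth $j$.

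Putting it together: any candidate \FO sentence $\varphi$ has some quantifier rank $k$, and then $\mathcal{T}_k$ and $\mathcal{U}_k$ agree on $\varphi$ but disagree on $\A\F p$, so $\varphi$ is not equivalent to $\A\F p$. Hence no \FO sentence expresses $\A\F p$ over unordered finitely branching infinite trees; in particular $\mathcal{L}(\A\F p)$ is not \FO-definable.
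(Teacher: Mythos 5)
Your overall strategy---an Ehrenfeucht--Fra\"{\i}ss\'{e} argument on a pair of trees, one satisfying $\A\F p$ with a large finite $\neg p$-prefix region and one containing a $p$-free branch---is the right kind of argument (the paper itself offers no proof here; the statement is imported from \cite{BBMP23}). However, your concrete construction has a genuine gap: the nested combs $G_{k+1}^{(N)}$ and $G_{k+1}^{\infty}$ are \emph{not} $\equiv_k$-equivalent for $k\ge 3$, no matter how large $N$ or the nesting depth. Consider the quantifier-rank-$3$ sentence $\psi = \exists x\,\bigl(\neg p(x) \wedge \forall y\,(\mathrm{child}(x,y)\rightarrow p(y))\bigr)$, where $\mathrm{child}(x,y)$ abbreviates $x< y\wedge\neg\exists z\,(x<z<y)$. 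In every $G_j^{(N)}$ with $j\ge 1$ (under the reading in which \emph{all} attached gadgets are also truncated --- which is forced, since otherwise $G_j^{(N)}\not\models\A\F p$ for $j\ge 2$) the last spine node $u_N$ of any embedded copy of $G_1^{(N)}$ is a $\neg p$-node all of whose children are $p$, so $\psi$ holds; in $G_j^{\infty}$ every $\neg p$-node (a spine node at some level, or a ray node of a $G_0$) has a $\neg p$-child, so $\psi$ fails. The underlying misconception is that pushing the ``scar'' deeper into the tree shields it from a bounded number of quantifiers: with the ancestor relation (indeed already with equality and the unary predicates) an existential quantifier reaches any node directly, so depth from the root provides no locality protection, and your compositional invariant cannot be maintained.

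The defect is not repairable by further nesting of this kind, because the scar is unavoidable on the ``yes'' side: in any tree with $\neg p$ root satisfying $\A\F p$, a maximal element of the (finite, by K\"onig) $\neg p$-prefix region is necessarily a $\neg p$-node whose children are all $p$. A correct construction must therefore plant the same local frontier patterns in \emph{both} trees and make them differ only in a property invisible to \FO, e.g.\ by arranging identical blocks along the critical spine in order type $\omega$ on one side and $\omega$ followed by copies of $\omega^{*}+\omega$ on the other, lifting the classical fact that well-orderedness of linear orders is not first-order definable; the Feferman--Vaught-style composition you invoke should then be applied to such ordered sums of identical components rather than to truncations, so that the multiset of realised local types coincides in the two trees while only the (non-\FO-expressible) well-foundedness of the $p$-free region differs.
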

\noindent The result follows from the equivalence of \FO and \polcCTLp.

\begin{proposition}{\cite[Lemma A.2]{LS95}} \label{polCTL stricly less than polCTLp}
    No \polcCTL formula can express the \CTL* formula $\E((p \lor q\U
    r) \U s)$.
\end{proposition}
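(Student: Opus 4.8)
The plan is to exhibit an explicit separating language and show it is expressible in \CTL* (indeed by the witnessing formula $\E((p \lor q\U r)\U s)$) but not in \polcCTL. Since \polcCTL is the past-free, $\E\R$-free fragment, the strategy is to adapt the standard Ehrenfeucht–Fraïssé / bisimulation-style argument used in \cite{LS95} for the analogous \CTL-vs-\CTL(p) separation. Concretely, I would define, for each $k \in \mathbb{N}$, a pair of $\Sigma$-trees $\mathcal{T}_k$ and $\mathcal{T}'_k$ over the atoms $\{p,q,r,s\}$ such that $\mathcal{T}_k$ is a model of $\E((p \lor q\U r)\U s)$, $\mathcal{T}'_k$ is not, yet the two trees agree on all \polcCTL formulas of temporal nesting depth at most $k$. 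The first two claims are routine model-checking once the trees are drawn; the third is the crux.

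First I would isolate the right equivalence relation on pointed trees that \polcCTL formulas cannot distinguish up to nesting depth $k$. Because \polcCTL lacks past operators, the natural candidate is a one-sided / stratified variant of stuttering bisimulation in which the "co-safety" direction ($\E(\varphi\U\varphi)$, $\E\X$) is matched in one direction and the "safety" direction ($\A(\varphi\R\varphi)$, $\A\X$) in the other — exactly the polarisation the paper emphasises. I would then prove by induction on formula structure that if $(\mathcal{T},\pi,0)$ and $(\mathcal{T}',\pi',0)$ are related by the depth-$k$ version of this relation, then they satisfy the same \polcCTL formulas of nesting depth $\le k$; the Boolean cases are immediate, and the modal cases follow from the back-and-forth conditions built into the relation. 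The delicate point is that the relation must be closed under the counting operators $\D^n$ as well: two related nodes must have, for each $n$, matching multiplicities of children whose subtrees are $(k-1)$-related, which forces the construction of $\mathcal{T}_k,\mathcal{T}'_k$ to carefully control branching degrees.

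The construction of the trees themselves: intuitively, $\E((p\lor q\U r)\U s)$ asks for a branch that can be segmented into an outer $\U$ whose "until" part is itself a disjunction containing an inner $\U$. Following \cite{LS95}, I would build $\mathcal{T}_k$ as a long "ladder" of alternating $p$-segments and $(q\U r)$-gadgets terminated by an $s$-node, with the ladder long enough (length growing with $k$) that a depth-$k$ \polcCTL formula cannot "count" far enough down the branch to detect the termination; $\mathcal{T}'_k$ is the same ladder but with the final $s$ replaced by something that breaks the outer until (e.g. a node where neither $p$ nor $r$ holds and which has no $s$-descendant), placed deep enough to be invisible at depth $k$. One then checks directly that $\mathcal{T}_k \models \E((p\lor q\U r)\U s)$ via the obvious branch, and that $\mathcal{T}'_k$ has no such branch, while the stratified bisimulation relating their roots is verified level by level.

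I expect the main obstacle to be the bisimulation/back-and-forth argument interacting with the counting operators and the nesting of until inside until. The "completely additive" / polarised shape of \polcCTL means the invariant relation is not a symmetric bisimulation but a more subtle asymmetric one, and getting its definition exactly right — so that it is simultaneously (i) preserved along the tree with the correct counting multiplicities, (ii) implies \polcCTL-equivalence up to depth $k$, and (iii) actually holds between the two constructed trees — is where the real work lies. A secondary obstacle is verifying non-expressibility is robust: since we want inexpressibility for every \polcCTL formula, not just bounded-depth ones, we need the family $\{(\mathcal{T}_k,\mathcal{T}'_k)\}_{k\in\mathbb{N}}$ to work uniformly, i.e. the ladder length must be chosen as an explicit (say, exponential or just linear) function of $k$ that provably defeats depth-$k$ formulas; pinning down that bound is the quantitative heart of the proof. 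As noted, all of this can be imported almost verbatim from \cite[Lemma A.2]{LS95}, so the proof ultimately reduces to citing that construction and checking that the counting operators $\D^n$ do not help — which they do not, since the separating trees can be taken with bounded branching degree where $\D^n$ collapses to a Boolean combination of ordinary \CTL modalities.
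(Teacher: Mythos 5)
The paper offers no proof of this proposition: it is imported wholesale from \cite[Lemma A.2]{LS95}, with the (implicit) observation that the argument tolerates the counting operators $\D[][n]$. Your overall plan --- reconstruct the LS95 family of tree pairs, set up a depth-indexed indistinguishability relation, and check that counting does not add power on those structures --- is therefore the same route the paper takes, and you correctly isolate the one genuinely new obligation, namely closure of the invariant under $\D[][n]$ via control of child multiplicities.

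However, the concrete construction you sketch would fail. You propose to make $\mathcal{T}'_k$ differ from $\mathcal{T}_k$ by burying the violation ``deep enough that a depth-$k$ formula cannot see it'' --- e.g.\ by removing the terminal $s$ or cutting off $s$-descendants far down the ladder. But $\E\F s = \E(\top \U s)$ is a \polcCTL formula of nesting depth $1$ with \emph{unbounded horizon}: it detects an $s$-node at any distance, so no amount of depth can hide such a difference. The inexpressibility of $\E((p \lor q\U r)\U s)$ is not a horizon phenomenon at all; it is a \emph{path-correlation} phenomenon. Both separating trees must agree on all simple reachability facts ($\E\F s$, $\E(q\U r)$ at every node, etc.); they differ only in whether the outer until and the inner $q\U r$ can be threaded along a \emph{single} branch, which is exactly what $\E(\cdot\U\cdot)$, re-quantifying a fresh branch at each nesting level, cannot enforce. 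The LS95 gadgets are engineered around this, with the induction parameter counting gadget repetitions rather than path length. A second, smaller issue: since \polcCTL is closed under negation (so $\A(\varphi\R\psi)$ is derivable), the right invariant is the ordinary symmetric ``agreement on all depth-$k$ formulas'' relation with back-and-forth moves for $\E\U$, $\E\X$ and $\D[][n]$; the ``asymmetric, polarity-respecting bisimulation'' you describe is not needed here and would complicate the induction without cause. With the construction actually taken from LS95 and the counting check made explicit, the argument goes through; as written, the self-contained version would not.
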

\noindent By investigating the proof of \cite[Lemma
  A.1]{LS95}, it turns out that there is indeed a formula
of \polcCTLp equivalent to $\E((p \lor q\U r) \U s)$ (it is quite long so we omit
it here). Note how such a formula predicates about a \emph{finite prefix} of the
selected path. Finally, to sum up: \polcCTL is strictly less expressive than
\polcCTLp, which is in turn less expressive than \cCTLp.

\medskip

\noindent{\bf Counting CTL$^*$ 
over finite paths.}
The second temporal logic equivalent to \FO that we investigate is
\cCTL*[f]. The equivalence of this logic with \FO can be established
easily by adapting the model-theoretic argument of \cite{MR03} and
it was first noticed in \cite[Proposition 3]{BBMP23}.

\cCTL*[f] is \cCTL* with path quantification ranging over
\emph{finite paths}. Given a $2^{AP}$-tree
$\mathcal{T} = (T, \tau)$, a finite non-empty path $\pi$ of $T$, and a
position $i$ on the path, the satisfaction relation $\models$ of a
\cCTL*[f] formula is defined as follows (once again we omit the obvious Boolean
cases):
\begin{compactitem}
    \item $\mathcal{T}, \pi, i \models \D^n\psi$ \iff path $\pi$ does not end
      at $\pi(i)$ and
          there are $n$
          distinct finite paths $\pi_1,...,\pi_n$ such that
          \begin{inparaenum}[(i)]
          \item they coincide with $\pi$ up to position $i$,
          \item they all differ at position $i+1$, and
          \item they satisfy $\mathcal{T}, \pi_j, i+1 \models \varphi$
            for all $1 \leq j \leq n$,
          \end{inparaenum}
    \item $\mathcal{T}, \pi, i \models p$ \iff $p \in \tau(\pi(i))$,
    \item $\mathcal{T}, \pi, i \models \E\psi$ \iff there is a finite path
      $\pi'$ coinciding with $\pi$ up to $i$, such that $\mathcal{T}, \pi', i
      \models \psi$,
    \item $\mathcal{T}, \pi, i \models \X\psi$ \iff path $\pi$ does not end
      at $\pi(i)$ and $\mathcal{T}, \pi, i+1 \models \psi$.
    \item $\mathcal{T}, \pi, i \models \psi \U \psi'$ \iff there is $j \geq i$
      such that $\mathcal{T}, \pi, j \models \psi'$ and $\mathcal{T}, \pi, k
      \models \psi$ for all $i \leq k < j$,
\end{compactitem}
While the semantics for atomic propositions and the temporal operator until $\U$
is unchanged compared to \cCTL*, the other clauses must be slightly adapted to
deal with finite path quantification.
The notions of formula equivalence and model must be adapted accordingly: two
formulas $\psi$ and $\psi'$ are \emph{equivalent} if for every $\Sigma$-tree
$\mathcal{T} = (T, \tau)$ and every finite non-empty path $\pi$ of $T$, we have
that $\mathcal{T}, \pi, 0 \models \psi$ iff $\mathcal{T}, \pi, 0 \models \psi'$;
a $\Sigma$-tree $\mathcal{T}$ is a \emph{model} of a formula $\varphi$, denoted
$\mathcal{T} \models \psi$, if $\mathcal{T}, \pi, 0 \models \psi$ for every
finite non-empty path $\pi$.
%

In \cCTL*[f], it makes sense to define an abbreviation for the dual of $\X$,
defined as $\tilde{\X}\varphi = \neg \X\neg\varphi$, thus corresponding to the
semantic clause:
\begin{compactitem}
\item $\mathcal{T}, \pi, i \models \tilde{\X}\psi$ \iff the path $\pi$ ends
  at $\pi(i)$ or $\mathcal{T}, \pi, i+1 \models \psi$.
\end{compactitem}
Notice that, in \cCTL*, the semantics of $\X$ and $\tilde{X}$ coincide ($\X$ is
the dual of itself).
The introduction of $\tilde{\X}$
changes a few things. Recall that
$\F\varphi \leftrightarrow \varphi \lor \X\F\varphi$ and $\varphi \U \psi
\leftrightarrow \psi \lor (\varphi \land \X(\varphi \U \psi))$. Now, by putting
$\G\varphi = \neg \F\neg\varphi$ and $\varphi \R \psi = \neg(\neg\varphi \U \neg
\psi)$, one obtains $\G\varphi \leftrightarrow \varphi \land
\tilde{\X}\G\varphi$ and $\varphi \R \psi \leftrightarrow \psi \land (\varphi
\lor \tilde{\X}(\varphi \R \psi))$.
Moreover, $\neg \E\X \neg \psi =
\A\tilde{\X}\psi$.

The equivalence of \cCTL*[f] with \polcCTL(p) could come as a surprise, 
since the former does not share the syntactic restriction of the latter. In the following proposition we show how finite path quantification
actually enforces semantically the same behaviour of \polcCTL(p).

\begin{proposition} \label{ctlf equivalences}
    Let $\varphi$, $\psi$, and $\gamma$ be \cCTL*[f] formulas. Then,
    the following are valid \cCTL*[f] equivalences:
    \begin{compactenum}
    \item $\E \tilde{\X} \varphi \leftrightarrow \top$ and
      $\A \X \varphi \leftrightarrow \bot$,
    \item $\varphi \R \psi \leftrightarrow \alpha^{\R}_{\varphi,\psi}$ and
      $\varphi \U \psi \leftrightarrow \alpha^{\U}_{\varphi,\psi}$, where
      $\alpha^{\R}_{\varphi,\psi} = \psi \U ((\tilde{X} \bot \vee \varphi)
      \wedge \psi)$ and $\alpha^{\U}_{\varphi,\psi} = \psi \R ((\X \top \wedge
      \varphi) \vee \psi)$
    \item $\E(\varphi \R \psi) \leftrightarrow \E(\alpha^{\R}_{\varphi,\psi})$
      and $\A(\varphi \U \psi) \leftrightarrow \A(\alpha^{\U}_{\varphi,\psi})$,
%
    \item $\E\X(\varphi \R \psi) \leftrightarrow
      \E\X(\alpha^{\R}_{\varphi,\psi})$ and $\A\tilde{\X}(\varphi \U \psi)
      \leftrightarrow \A\tilde{\X}(\alpha^{\U}_{\varphi,\psi})$
    \item $\E (\varphi \U (\psi \R \gamma)) \leftrightarrow \E(\varphi \U
      \alpha^{\R}_{\psi,\gamma})$ and $\A(\varphi \R (\psi \U \gamma))
      \leftrightarrow \A (\varphi \R \alpha^{\U}_{\psi,\gamma})$,
%
    \item
      $\E((\varphi \R \psi) \U \gamma) \leftrightarrow
      \E(\alpha^{\R}_{\varphi,\psi} \U \gamma)$ and $\A ((\varphi \U \psi) \R
      \gamma) \leftrightarrow \A(\alpha^{\U}_{\varphi,\psi} \R \gamma)$.
    \end{compactenum}
\end{proposition}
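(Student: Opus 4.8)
The plan is to treat items 1 and 2 as the only substantive steps and to obtain items 3--6 for free. It is convenient to prove the stronger \emph{point-wise} versions of all equivalences, namely that the two sides of each $\leftrightarrow$ are satisfied at exactly the same triples $(\mathcal{T},\pi,i)$ --- with $\pi$ a finite non-empty path of $\mathcal{T}$ and $i$ a position on $\pi$. This clearly implies the stated (initial) equivalences, and it also licenses replacing a subformula by a point-wise equivalent one inside an arbitrary $\cCTL*[f]$ context: point-wise equivalence is a congruence, as one sees by a routine induction on the context, using that the truth of each $\cCTL*[f]$ connective at a triple depends only on the sets of triples satisfying its immediate subformulas (this holds for $\D^n$, $\E$, $\X$, $\U$, and the Booleans alike).

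For item 1, fix $\mathcal{T}$, $\pi$, $i$ and let $\pi'$ be the prefix of $\pi$ of length $i+1$. This is a legitimate finite path (it starts at the root and each node is a child of its predecessor), it coincides with $\pi$ up to $i$, and it \emph{ends} at $\pi'(i)$. Hence $\mathcal{T},\pi',i\models\tilde{\X}\varphi$ holds because $\pi'$ ends at position $i$, which witnesses $\mathcal{T},\pi,i\models\E\tilde{\X}\varphi$; so $\E\tilde{\X}\varphi\equiv\top$. The same $\pi'$ is a counterexample to $\mathcal{T},\pi,i\models\A\X\varphi$, since $\X\varphi$ requires the path not to end at the current position; hence $\A\X\varphi\equiv\bot$ (equivalently, this is the De Morgan dual of the first equivalence, via $\A\X\varphi=\neg\E\tilde{\X}\neg\varphi$).

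For item 2, fix $\mathcal{T},\pi,i$ and let $L$ be the last position of $\pi$. Unfolding $\varphi\R\psi=\neg(\neg\varphi\U\neg\psi)$ together with the finite-path semantics of $\U$ shows that $\mathcal{T},\pi,i\models\varphi\R\psi$ iff $\psi$ holds at every position $k$ with $i\le k\le j_0$, where $j_0$ is the least position in $[i,L]$ at which $\varphi$ holds, and $j_0=L$ if $\varphi$ holds nowhere on $[i,L]$. On the other side, $\tilde{\X}\bot$ holds at a position $k$ exactly when $k=L$, so $(\tilde{\X}\bot\vee\varphi)\wedge\psi$ holds at $k$ exactly when $\psi$ holds at $k$ and either $k=L$ or $\varphi$ holds at $k$; unfolding $\U$ gives that $\mathcal{T},\pi,i\models\alpha^{\R}_{\varphi,\psi}$ iff there is $j\in[i,L]$ with ($j=L$ or $\varphi$ at $j$) and $\psi$ at every position of $[i,j]$, and the least admissible such $j$ is precisely $j_0$, so the two conditions coincide. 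This proves $\varphi\R\psi\equiv\alpha^{\R}_{\varphi,\psi}$ point-wise. The companion equivalence $\varphi\U\psi\equiv\alpha^{\U}_{\varphi,\psi}$ is then purely formal: instantiate the one just proved at $\neg\varphi,\neg\psi$, negate both sides, and push negations inward using $\neg(A\U B)=\neg A\R\neg B$ and $\neg\tilde{\X}\bot=\X\top$; the left-hand side becomes $\varphi\U\psi$ and the right-hand side becomes $\psi\R((\X\top\wedge\varphi)\vee\psi)=\alpha^{\U}_{\varphi,\psi}$.

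Finally, each of items 3--6 is obtained from item 2 by substituting $\alpha^{\R}_{\varphi,\psi}$ for $\varphi\R\psi$ (respectively $\alpha^{\U}_{\varphi,\psi}$ for $\varphi\U\psi$) inside the relevant context: $\E(\cdot)$ and $\A(\cdot)$ for item 3, $\E\X(\cdot)$ and $\A\tilde{\X}(\cdot)$ for item 4, $\E(\varphi\U(\cdot))$ and $\A(\varphi\R(\cdot))$ for item 5, $\E((\cdot)\U\gamma)$ and $\A((\cdot)\R\gamma)$ for item 6 --- each substitution preserving point-wise (hence initial) equivalence by the congruence remark of the first paragraph. The only real difficulty sits inside item 2: one must pin down the behaviour of release \emph{at the right endpoint} of a finite path --- in particular that $\tilde{\X}\bot$ is exactly the ``last position'' predicate and that it must be folded into the until-target so that the release obligation can be discharged by running off the end of the path --- and cross-check this against the precise $\U$-convention (witness $j$ with $\psi'$ at $j$ and $\psi$ at all $i\le k<j$); once that is fixed, the duality and congruence steps are mechanical.
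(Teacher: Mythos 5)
Your proof is correct. The paper states this proposition without giving a proof, and your argument --- establishing the point-wise versions of items 1 and 2 by direct inspection of the finite-path semantics (identifying $\tilde{\X}\bot$ as the end-of-path predicate and checking that the least admissible until-witness coincides with the release obligation), then deriving items 3--6 by congruence and the $\U$/$\R$ duality --- is the natural and complete way to verify it.
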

Equivalences with $\F$ and $\G$ can be obtained from the equivalences with $\U$
and $\R$. Thus, even if it seems that \cCTL*[f] is able to syntactically specify
properties impossible to express with \polcCTL(p), the finite path
quantification indeed trivializes many of these properties.
Interestingly, both logics have an ability the other one lacks, but as we have already mentioned they share the same expressive power:  \polcCTL(p) enjoys past operators and thus the ability to reason backwards over the input tree, while \cCTL*[f] is able to nest $\U$ and $\R$ operators without a path quantifiers in between. The equivalence with \FO shows that these abilities yield the same expressiveness.
Indeed, notice that the \polcCTL(p) formula equivalent to the above cited \CTL*
formula $\E((p \lor q\U r)\U s)$, that makes \polcCTL(p) strictly more
expressive than \polcCTL~(see \Cref{polCTL stricly less than polCTLp} above),
preserves its meaning also in \cCTL*[f]. Basically, the syntactic restrictions
of \polcCTL(p) are semantically retained in \cCTL*[f]. 
Once again, we mention the completely additive fragment of the modal \MC to emphasize its relationship with these two logics and to better understand their behaviour. In \cite[Theorem 3.6]{CV14}, this fragment is shown equivalent to \PDL, while in \cite[Theorem 1]{Car15} it is shown that \PDL is equivalent to a fragment of Weak Monadic Chain Logic (\WMCL). Since \FO is itself a fragment of \WMCL, it makes sense that the temporal
logics equivalent to it enjoys such ``polarized'' properties, meaning that the existential and universal modalities are bounded to express dual properties, such as co-safety and safety ones.



\section{Hesitant and Weak Graded Tree Automata}
Here we study a general automaton model inspired by the class of 
\emph{Graded Tree Automata}, originally introduced in \cite[Section 3]{KSV02}. 
We progressively restrict the recognising power of this model, 
until we obtain two classes of automata provably 
equivalent to \polcCTL(p) and \cCTL*[f], and consequently to \FO. 

\medskip

\noindent{\bf Graded tree automata.}
A set of \emph{positive Boolean formula} over a set $X$ is denoted by $\mathcal{B}^+(X)$, and is composed of formulas of the form $\top, \bot, x, \theta \lor \theta, \theta \land \theta$, where $x \in X$.   

A \emph{Graded Alternating Parity Tree Automaton} (GTA) is a tuple 
$\mathcal{A} = \langle Q, \Sigma, \delta, q_I, \Omega\rangle$, 
where $Q$ is a finite set of states, $\Sigma$ is a finite alphabet, 
$q_I \in Q$ is an initial state, 
$\Omega: Q \rightarrow \mathbb{N}$ is a priority function, and 
$\delta: Q \times \Sigma \rightarrow \mathcal{B^+}(\{ \diamond_k, \square_k \mid k>0\} \times Q)$ 
is an alternating transition function.
We call \emph{atom} any pair $(\bigcirc_k, q)$, where $\bigcirc$ is either $\diamond$ or $\square$.
When $k=1$, we simply write $\diamond$ instead of $\diamond_k$ and $\square$ instead of $\square_k$. 

A \emph{run} of a GTA $\mathcal{A}$ on a $\Sigma$-labeled tree $\mathcal{T} = (T, \tau)$, 
is a $V_T \times Q$-labelled tree $\mathcal{R} = (R, \rho)$, where $R=(V_R,E_R)$ is a tree 
(possibly with some leaves) and 
$\rho: V_R \rightarrow V_T \times Q$ is a labeling function that associates every mnode of $R$ to a node of the input tree
and a state of the automaton. 
Before describing the additional conditions that must be satisfied by a run, we briefly explain how 
to evaluate a positive Boolean formula that may appear in the transition function of the automaton. 
We start by considering the case of atoms. 
Given a candidate run $\mathcal{R} = (R, \rho)$ on $\mathcal{T}=(T,\tau)$ and a node $s\in V_R$, 
with $\rho(s)=(n,q)$, we let:
\begin{compactitem}
    \item $s \models (\diamond_k, q')$ if $n$ has at least $k$ successors in $T$, say, $n_1, ..., n_k$, 
          and, for each $i=1,\dots,k$, $s$ has at least one successor in $R$ labeled by $(n_i,q')$;
    \item $s \models (\square_k, q')$ if for all but $k-1$ successors $n'$ of $n$ in $T$, 
          there is a successor of $s$ in $R$ that is labeled by $(n',q')$. 
\end{compactitem}
The satisfaction relation $\models$ is then extended to the constants $\top$ and $\bot$
and to the Boolean connectives $\land$ and $\lor$ in the obvious way,
e.g.~by letting $s\models\top$ when $s$ is a leaf.
Now, the additional conditions that must be satisfied by a run $\mathcal{R} = (R,\rho)$ 
of $\mathcal{A}$ on $\mathcal{T} = (T, \tau)$ are as follows:
\begin{inparaenum}[(a)]
    \item $\rho(\epsilon_R)=(\epsilon_T,q_I)$,
    \item for every node $s\in R$, with $\rho(s) = (n,q)$, $s \models \delta(q,\tau(n))$.
\end{inparaenum}
Given such a run $\mathcal{R} = (R, \rho)$ and a maximal path $\pi$ in it (which can be finite or infinite), 
we denote by $\inf(\pi)$ the set of states visited infinitely often along $\pi$,
and we say that $\pi$ is \emph{accepting} if either it terminates in a leaf (implying that
there is a transition reaching $\top$), or it is infinite and the highest priority 
assigned by $\Omega$ to the states in $\inf(\pi)$ is even.
Finally, a run $\mathcal{R}$ is \emph{successful} if every maximal path in it is accepting. 

\medskip \noindent{\bf Weak and Hesitant tree automata.}
We are interested in a proper subclass of GTAs, characterized by a \emph{hesitant} 
partition on the state set and a \emph{weak} acceptance condition. 
We define the weak acceptance condition first, following \cite{MSS86}. 

A \emph{Weak Graded Alternating Parity Tree Automaton} $\mathcal{A} = \langle Q, \Sigma, \delta, q_I, \Omega\rangle$ (WGT) is a GTA such that there is a partition of $Q$ into disjoint non-empty sets (from now on, \emph{components}) \{$Q_1, ...., Q_k$\} and a partial order $\leq$ such that the transitions from a state in $Q_i$ can only lead to states in $Q_i$ or to states in a component with lower order. The acceptance condition is weak because we enforce that every component either contains only states marked even by the priority function or contains only states marked odd. 

A \emph{Hesitant Weak Graded Alternating Tree Automaton} (HWGT) $\mathcal{A} = \langle Q, \Sigma, \delta, q_I, \Omega\rangle$ 
is a WGT such that each component is of one of the following three types:
\vspace{-0.5ex}
\begin{compactitem}
    \item $Q_i$ is \emph{existential}, if for all $\sigma \in \Sigma$ and for all $q, q' \in Q_i$, 
    $q'$ can appear in the disjunctive normal form of $\delta(q,\sigma)$ only in an atom $(\diamond, q')$, 
    and only disjunctively related to other atoms with states in $Q_i$;
    \item $Q_i$ is \emph{universal}, if for all $\sigma \in \Sigma$ and for all $q, q' \in Q_i$, 
    $q'$ can appear in the conjunctive normal form of $\delta(q, \sigma)$ only in an atom $(\square, q')$,
    and only conjunctively related to other atoms with states in $Q_i$;
    \item $Q_i$ is \emph{transient}, if for all $\sigma \in \Sigma$ and for all $q,q' \in Q_i$, 
          $q'$ does not appear in any atom in $\delta(q,\sigma)$.
\end{compactitem}
\vspace{-0.5ex}
Note that in the existential and universal components $\diamond_k$ and $\square_k$ must have $k = 1$, if they are paired with a state of the component. 
    This is the \emph{hesitant} constraint on the partition of the state set, introduced in \cite[Section 5.1]{KVW00}. Clearly, every path of a given run will eventually get stuck in an existential or a universal component. We let every state in a universal (resp., existential) component be marked even (resp., odd) by the priority function. Let $Q_i$ be the component in which a path $\pi$ of a given run $\mathcal{R}$ gets stuck: if $Q_i$ is universal, $\pi$ satisfies the accepting condition if it visits infinitely often a state in $Q_i$; if $Q_i$ is existential, $\pi$ satisfies the accepting condition if it visits finitely often every state in $Q_i$. Notice that this accepting condition is weak, since it basically states that the universal components are entirely accepting and the other components are entirely rejecting.
It is possible to have a HWGT $\mathcal A$ with a positive boolean formula $\theta$ as initial condition instead of a state. It is then possible to convert it to an automaton with an initial state by having as initial an extra state $q_\theta$, such that $\delta(q_\theta, \sigma) = \theta$ for every $\sigma \in \Sigma$. 
Such an automaton will be denoted by $\mathcal{A}^\theta$. Given a HWGT $\mathcal{A}$ and a state $q \in Q$, we denote by $\mathcal{A}^q$ the automaton $\mathcal{A}$ where $q$ is the starting state.

Summing up, we will work with a HWGT $\mathcal{A} = \langle Q, \Sigma, \delta, q_I, \Omega \rangle$, with an hesitant partition on the state set and a weak acceptance condition, stating that every state in the existential components are only visited finitely often or some state in the universal components is visited infinitely often. However, this class of automata is still not \emph{weak} enough to be equivalent to \FO. Indeed, note the striking similarity of HWGTs with the \emph{Additive Weak Parity Automata} (\emph{AWA}) introduced in \cite[Section IV]{Car15} and proven equivalent to \WMCL on trees. This suggests that HWGTs as above defined are equivalent to this logic, too. In the following, we will further restrict their recognising power and yield equivalence results with the two temporal logics above introduced.



\section{Automaton-based Characterization of \texorpdfstring{\polcCTLp}{polarized cCTLp}}
What currently prevents us from saying that HWGTs and \polcCTLp are equivalent are the following two observations: an HWGT cannot reason upward along the input tree and its components have no restriction other than being purely accepting or purely rejecting. In this section we will solve both these problems in a straightforward way, and prove that the obtained class of automata is equivalent to \polcCTLp.

\medskip

\noindent{\bf Two-Way Linear HGTA.}
A \emph{Two-Way HWGT} (\emph{2HWGT}) 
is defined exactly like a HWGT, namely, as a tuple
$\mathcal{A} = \langle Q, \Sigma, \delta, q_I, \Omega \rangle$,
but where 
$\delta: Q \times (\Sigma \times \{ 0,1 \}) \rightarrow  \mathcal{B}^+(\{\diamond_k, \square_k, -1\} \times Q)$.
The idea is that the automaton can send states
towards the successors of the currently visited node, as well as towards its parent, if it exists. 
These transitions are determined, as usual, from the current state and from the label of the visited node, 
but they might also depend on whether the head is at the root or not --- for this, we explicitly mark the 
nodes of the input tree with a flag that distinguishes the root from the other nodes.

%
%
%
%
Given a run $\mathcal{R} = (R,\rho)$ of a 2HWGT $\mathcal{A}$ over a $\Sigma$-labelled tree as above, and a node $s$ in $V_R$ such that $\rho(s) = (n',q')$, the semantics of an atom $(-1,q) \in \mathcal{B}^+(\{\diamond_k, \square_k, -1\} \times Q)$, 
for some $q \in Q$, is as follows:


\begin{compactitem}
\item $s \models (-1, q)$ \iff $n'$ has a parent $n''$ and $s$ has a parent labelled by ($n'',q$) -- note in particular that when the head is at the root this atom is always false.
\end{compactitem}
Moreover, we enlarge the hesitant types adding to the above defined transient, existential and universal the following. Given a 2HWGT $\mathcal{A}$ and a component $Q_i$ of $\mathcal{A}$:
\begin{compactitem}
    \item $Q_i$ is \emph{upward}, if for all $\sigma \in \Sigma$ and for all $q,q' \in Q_i, q'$ can appear in $\delta(q,\sigma)$ only in the form $(-1,q)$.
\end{compactitem}

Finally, a \emph{linear} HWGT, or HLGT for short, is a HWGT
$\mathcal{A} = \langle Q, \Sigma, \delta, q_I, \Omega \rangle$ 
in which the partitioning of the state set is composed entirely of \emph{singletons}. 
This latter condition is reminiscent of a restriction on Additive Weak Parity Automata (AWA) (already cited above at the end of section 4)
introduced in \cite[Definition 3.3.1]{For19}, and giving rise to the so-called
\emph{Antisymmetric} AWA. In that paper, the author states the expressive equivalence 
between AWA and \WMCL, and conjectures an equivalence between Antisymmetric AWA and \FO
(he was only able to prove that \FO is at least as expressive as Antisymmetric AWA).
This conjectured correspondence will be discussed again at the end of the section.


Finally, merging the two restrictions together, we get the class of 2HLGT, 
i.e., two-way automata in which every state is either transient, existential, 
universal or upward. We will now prove that 2HLGT are equivalent to \polcCTLp.

\subsection{Equivalence of 2HLGT and \texorpdfstring{\polcCTLp}{polarized cCTLp}}
Here we prove that given a 2HLGT, it is always possible to obtain a \polcCTLp
formula equivalent to it. The approach is a combination of the translation
procedure of \cite[Theorem 6]{LT00} and \cite[Theorem 3.1]{BS18}.
\begin{theorem}
    Given a 2HLGT $\mathcal{A}$ over $\Sigma = 2^{AP}$, for a set of atomic propositions $AP$, it is always possible to obtain a \polcCTLp formula $\varphi_\mathcal{A}$ such that $\mathcal{L}(\mathcal{A}) = \mathcal{L}(\varphi_\mathcal{A})$.
\end{theorem}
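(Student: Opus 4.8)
The plan is to translate a given 2HLGT $\mathcal{A} = \langle Q, \Sigma, \delta, q_I, \Omega\rangle$ into an equivalent \polcCTLp formula by associating a formula $\varphi_q$ to each state $q\in Q$, defined by a system of equations that mirrors the transition function $\delta$, and then solving the system component by component following the (linear, hence well-founded) order on the singleton components. The key observation that makes \polcCTLp the right target is the correspondence between the hesitant types and the polarised syntax: a transient state contributes a plain Boolean/next step; an existential state, whose transitions use only $\diamond$ and disjunctions towards its own component, is precisely a least-fixpoint that unwinds as an $\E(\cdot\,\U\,\cdot)$; a universal state, using only $\square$ and conjunctions within its component, is a greatest-fixpoint that unwinds as an $\A(\cdot\,\R\,\cdot)$ — and the latter is available in \polcCTLp as the abbreviation $\A(\varphi\R\varphi') = \neg\E(\neg\varphi\U\neg\varphi')$; finally an upward state, whose transitions use only $(-1,q)$, unwinds along ancestors and is captured by the past operators $\Y$ and $\S$. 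Because every component is a singleton, each fixpoint is over a single variable, so no mutual recursion inside a component arises and the unwinding is exactly a single $\U$, $\R$, or $\S$.

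The steps, in order, would be: (1) fix the component order $Q_1 > Q_2 > \cdots$ and proceed by induction from the lowest components upward, so that when processing $q$ all states reachable from $q$ in a strictly lower component already have their formulas $\varphi_{q'}$ defined; (2) for a transient $q$, read off $\delta(q,(\sigma,b))$ for each $\sigma\in\Sigma$ and each flag $b\in\{0,1\}$, replace each atom $(\diamond,q')$ by $\E\X\varphi_{q'}$, each $(\square,q')$ by $\A\X\varphi_{q'} = \neg\E\X\neg\varphi_{q'}$, each $(-1,q')$ by $\Y\varphi_{q'}$, keep the Boolean structure, and take the conjunction over $\sigma$ guarded by the characteristic formula of $\sigma$ (and a disjunction over $b$ guarded by a formula distinguishing the root — expressible since being the root is $\neg\Y\top$); (3) for an existential $q$, the self-references make $\varphi_q$ satisfy an equation $\varphi_q \leftrightarrow \psi_q \lor \E\X\varphi_q \lor \E\X\chi_q$ of the shape solved by an $\E(\alpha\U\beta)$ formula, where $\alpha,\beta$ are built from the already-defined lower-component formulas and the Boolean constituents, and the weak acceptance condition (the existential component is rejecting, i.e. must be exited) guarantees the least-fixpoint reading, which is exactly $\E(\cdot\U\cdot)$; (4) dually for a universal $q$, the greatest-fixpoint reading (the component is accepting) gives $\A(\cdot\R\cdot)$, written via the \polcCTLp-admissible abbreviation; (5) for an upward $q$, the self-reference is along $(-1,\cdot)$ only, and the corresponding fixpoint unwinds into a $\cdot\S\cdot$ formula over ancestors, using $\Y$ for the single-step case; (6) set $\varphi_\mathcal{A} := \varphi_{q_I}$ (or, if $\mathcal{A}$ has an initial positive Boolean formula $\theta$ rather than a state, translate $\theta$ directly as in step (2)), and prove $\mathcal{L}(\mathcal{A}) = \mathcal{L}(\varphi_\mathcal{A})$ by showing, by induction on the component order and then on the fixpoint unwinding, that for every $\Sigma$-tree $\mathcal{T}$, every node $n$, and every state $q$, the automaton $\mathcal{A}^q$ accepts the subtree structure at $n$ exactly when $\mathcal{T}, \pi, i \models \varphi_q$ for any path $\pi$ through $n$ at position $i$ — crucially using that upward atoms only ever inspect the already-fixed ancestor chain, so the evaluation along a path is well-defined.

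The main obstacle I expect is establishing the correctness of the fixpoint solutions in the presence of two-wayness: the run of a 2HLGT is a tree whose branches may move up and down the input tree arbitrarily, so one must argue that the acceptance of $\mathcal{A}^q$ from a node $n$ genuinely depends only on the subtree at $n$ together with the finite ancestor path from the root to $n$, and that the alternation between downward existential/universal phases and upward phases does not create "loops" that break the polarised least/greatest-fixpoint reading. This is where the hesitant partition into singletons does the heavy lifting — it forbids, within a single component, mixing $\diamond$ with $\square$, or either with $(-1,\cdot)$, so each component is purely one phase — but the bookkeeping that stitches the phases together correctly, and the verification that the weak acceptance condition (each existential singleton eventually left, each universal singleton possibly looped forever) corresponds to choosing least versus greatest fixpoints consistently with the $\U$ versus $\R$ semantics, is the technically delicate core of the argument. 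The rest (the characteristic formulas for letters, the root flag, the routine induction on Boolean structure) is bookkeeping; combining the translation schemes of \cite[Theorem 6]{LT00} and \cite[Theorem 3.1]{BS18} gives the template, and the novelty is in handling the upward components via past operators within the polarised fragment.
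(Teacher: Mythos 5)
Your proposal follows essentially the same route as the paper's proof: induction over the well-founded order on the singleton components, a per-letter characteristic formula, translation of atoms via counting and past operators, and solving the single-variable fixpoint equation of each component type as $\E(\cdot\U\cdot)$, $\A(\cdot\W\cdot)$ (equivalently $\A(\cdot\R\cdot)$), or $\cdot\S\cdot$, exactly as the paper does by writing $\delta(q,\sigma)$ in the form $((\bigcirc,q)\land\theta)\lor\theta'$ and setting $\gamma_q,\gamma'_q$ accordingly. The only detail worth tightening is that graded atoms $(\diamond_k,q')$ and $(\square_k,q')$ with $k>1$ (which may target lower components) should be translated via $\D^k\chi(q')$ and $\neg\D^k\neg\chi(q')$ rather than plain $\E\X$/$\A\X$.
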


\begin{proof}
    Fix a 2HLGT $\mathcal{A} = \langle Q, \Sigma, \delta, q_I, \Omega \rangle$. We will show how to translate any given $q \in Q$ by induction over the order of the singletons in the partition. This will yield the result, by the translation of $q_I$. For a given state $q$ or a positive boolean formula $\theta$, we will use in the following $\chi(q)$ and $\chi(\theta)$ for the \polcCTLp formulas equivalent to $\mathcal{A}^q$ and $\mathcal{A}^\theta$, respectively.

    First of all, we translate any $\sigma \in \Sigma$ by the \polcCTLp formula $\psi_\sigma = \bigwedge\nolimits_{\text{$\apElm \in \sigma$}} \apElm \land \bigwedge\nolimits_{\text{$\apElm \notin \sigma$}} \neg\apElm$.
By construction, for every $q \in Q$ and every $\sigma \in \Sigma$, if $q' \in
\delta(q,\sigma)$ and $q' \neq q$, we know that $q'$ is of lower order than
$q$ with respect to the partition of the state set. By inductive hypothesis, we
could assume to have a \polcCTLp formula $\chi(q')$ for each one of
them, if it wasn't for atoms. Indeed, since \polcCTLp is closed under
$\lor$ and $\land$, and $\top$ and $\bot$ are already allowed formulas of the
logic, we can assume to translate this kind of formulas without any problem.
With regards to atoms, however, the translation is also easy:
$(\diamond_k, q)$ is translated as $\D^k\chi(q)$, ($\square_k, q)$ is translated
as $\neg \D^k\neg \chi(q)$ and ($-1,q) = \Y \chi(q)$. We just have to show how to
obtain $\chi(q)$. This is done in the following way. Having fixed a $q \in Q$
and a $\sigma \in \Sigma$, $\delta(q,\sigma)$ is a positive boolean formula
$\theta$. Moreover, if we ignore a (possible) atom in wich $q$ itself occurs, it
is a positive boolean formula composed entirely of states of order lower than
$q$, denoted by $\chi(\theta_{q, \sigma})$. The subscript says that it is the
formula obtained by $\delta(q,\sigma)$. Now, we define the recursion to obtain $\chi$. This is done by induction on the order on the components of the
partition, that is well-founded.

Given $q \in Q$ and $\sigma \in \Sigma$, we know that $q$ can only be of a specific type because the automaton is hesitant. For any one of these types we can present $\delta(q,\sigma)$ in a precise way, as follows:
\begin{compactitem}
    \item if $q$ is transient, $\delta(q, \sigma$) is of the form $\theta'$,
    \item if $q$ is existential, $\delta(q, \sigma$) is of the form (($\diamond, q) \land \theta) \lor \theta')$,
    \item if $q$ is universal, $\delta(q, \sigma$) is of the form (($\square, q) \land \theta) \lor \theta')$,
    \item if $q$ is upward, $\delta(q, \sigma$) is of the form (($-1, q) \land \theta) \lor \theta')$.
\end{compactitem}
We impose that $q$ does not appear neither in $\theta$ nor $\theta'$ and this is always possible because if there are more instances of $q$ in $\delta(q,\sigma)$ we can always rewrite it with just one. The reasoning behind this form is that $\theta$ is the positive boolean formula that appears conjunctively related with $q$ itself, while $\theta'$ is only disjunctively related to $q$. This means that whenever $\theta$ is true, the automaton has to keep staying in state $q$, while whenever $\theta'$ is true, it must leave $q$.

Recall that by induction we can assume to have $\chi(\theta_{q,\sigma})$ and $\chi(\theta'_{q,\sigma})$. Thanks to this, we can obtain two formulas related to $q$, that tells if the automaton stays in the state or leaves it. Namely:

{

  \centering

    $\gamma_q = \bigvee\nolimits_{\text{$\sigma \in \Sigma$}} \psi_\sigma \land \chi(\theta_{q, \sigma})$
    \quad \quad
    $\gamma'_q = \bigvee\nolimits_{\text{$\sigma \in \Sigma$}} \psi_\sigma \land \chi(\theta'_{q, \sigma})$

}

What is missing is how to translate a single state:
\begin{compactitem}
    \item if $q$ is transient, $\varphi_q = \gamma'_q$,
    \item if $q$ is existential, $\varphi_q = \E(\gamma_q \U \gamma'_q)$,
    \item if $q$ is universal,
    $\varphi_q = \A(\gamma_q \W \gamma'_q)$,
    \item if $q$ is upward, $\varphi_q = \gamma_q\S\gamma'_q$.
\end{compactitem}
\end{proof}
\vspace{-4.5ex}
\begin{proposition} \label{correctness of translation from 2HLGT to logic}
    For any $q \in Q$, $\mathcal{L(\mathcal{A}}^q) = \mathcal{L}(\chi(q))$.
\end{proposition}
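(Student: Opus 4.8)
The plan is to prove $\mathcal{L}(\mathcal{A}^q) = \mathcal{L}(\chi(q))$ by induction on the order of the singleton component $\{q\}$ in the partition, exactly mirroring the inductive construction of $\chi$ in the theorem. The statement to establish is actually the pointwise refinement: for every $\Sigma$-tree $\mathcal{T}=(T,\tau)$, every infinite path $\pi$, and every position $i$, the automaton $\mathcal{A}^q$ started at node $\pi(i)$ accepts the subtree hanging there (with the appropriate root/non-root flag and with the parent information from $\pi$ available for upward moves) if and only if $\mathcal{T},\pi,i \models \chi(q)$. The base cases are the transient components at the bottom of the order: there $\chi(q)=\gamma'_q$ is a pure Boolean-and-atom combination of strictly-lower-order formulas (equivalently, just of propositions in the very base case), so $s \models \delta(q,\tau(n))$ unfolds directly into a disjunction over $\sigma$ of $\psi_\sigma$ (pinning down the label) conjoined with $\chi(\theta'_{q,\sigma})$, and the semantics of the atoms $(\diamond_k,q')$, $(\square_k,q')$, $(-1,q')$ match $\D^k$, $\neg\D^k\neg$, $\Y$ respectively by the inductive hypothesis applied to the strictly-lower state $q'$.

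For the inductive step I would do the four cases by the canonical form $\delta(q,\sigma) = ((\bigstar,q)\wedge\theta)\vee\theta'$ with $\bigstar \in \{\diamond,\square,-1\}$ and $q \notin \theta,\theta'$. The key local fact is: a run of $\mathcal{A}^q$ on a node $n$ labelled $\sigma$ either (a) picks the right disjunct $\theta'$, detaches from $q$, and spawns successor runs for states of strictly lower order — handled by the inductive hypothesis, which gives exactly $\gamma'_q$ at $n$; or (b) picks the left disjunct, satisfies $\theta$ at $n$ (giving $\gamma_q$ there by induction) and moves the copy of $q$ along the atom $\bigstar$, i.e.\ to some child (existential $\diamond$), to all children (universal $\square$), or to the parent (upward $-1$), and recurses. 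Iterating (b) along a path traces out exactly a witnessing path for the temporal operator, and the weak/hesitant acceptance condition is what pins down which way termination must go:
\begin{compactitem}
\item \emph{existential} ($\{q\}$ is an existential component, hence odd-priority, hence a maximal path that stays in $q$ forever is rejecting): the run is accepting iff along some child-branch we reach, after finitely many $q$-steps each satisfying $\gamma_q$, a node satisfying $\gamma'_q$ — this is precisely $\E(\gamma_q \U \gamma'_q)$;
\item \emph{universal} ($\{q\}$ even-priority, so staying in $q$ forever is accepting): every branch that keeps the $q$-copy alive must satisfy $\gamma_q$ at every step, and any branch that drops it must first pass through a node satisfying $\gamma'_q$ — this is exactly $\A(\gamma_q \W \gamma'_q)$, where the $\W$ (rather than $\U$) captures the accepting infinite stay;
\item \emph{upward} ($\{q\}$ is a transient-by-position component: the atom $(-1,q)$ can only be taken finitely often since the tree has finite depth above any node, so the $q$-copy cannot stay alive forever): moving up along ancestors of $\pi(i)$ with $\gamma_q$ holding until a node where $\gamma'_q$ holds — this is $\gamma_q \S \gamma'_q$, using that the flagging of the root makes $(-1,q)$ false there so the ``since'' is forced to terminate;
\item \emph{transient}: already covered, $\varphi_q = \gamma'_q$.
\end{compactitem}

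The main obstacle, and the step deserving the most care, is the upward case together with the bookkeeping of what ``$\mathcal{A}^q$ accepts at $\pi(i)$'' means: upward moves make acceptance of $\mathcal{A}^q$ depend not just on the subtree rooted at $\pi(i)$ but on the ancestor sequence, so the induction hypothesis must be stated relative to a path $\pi$ and position $i$ (not merely relative to a node), and one must check that (i) an upward-going copy eventually either dies into $\theta'$ or is killed by hitting the root (since in a finitely-branching tree the root is at finite distance, the upward component really is ``transient'' in the run even though syntactically it self-loops), and (ii) the interaction between an upward copy walking towards the root and the downward copies it may have spawned is consistent — this is exactly why past operators $\Y$ and $\S$ suffice and why no fixpoint subtlety arises. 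A secondary point to verify is that the canonical form $((\bigstar,q)\wedge\theta)\vee\theta'$ is achievable without loss of generality: one must argue that any positive Boolean combination in which $q$ occurs only in a single atom type (guaranteed by the hesitant constraint) can be put in disjunctive normal form and regrouped so that all clauses containing the $q$-atom are collected as $(\bigstar,q)\wedge\theta$ and the rest as $\theta'$, collapsing multiple $q$-occurrences into one via $x\wedge\alpha \;\vee\; x\wedge\beta \equiv x\wedge(\alpha\vee\beta)$ and $x\vee(x\wedge\alpha)\equiv x$. Everything else is a routine matching of automaton transitions against the $\models$-clauses of \polcCTLp, using that the logic is closed under $\wedge,\vee$ and contains $\top,\bot$, so the translations $\chi(\theta_{q,\sigma})$ and $\chi(\theta'_{q,\sigma})$ of the lower-order sub-formulas are legitimate \polcCTLp formulas.
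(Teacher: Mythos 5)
Your plan is correct and follows exactly the route the paper intends (and leaves implicit): an induction on the order of the singleton components, with the pointwise strengthening to a path--position pair so that upward moves and past operators can be matched, and a case analysis showing that the weak/hesitant acceptance condition forces $\U$, $\W$, and $\S$ in the existential, universal, and upward cases respectively. The only minor imprecision is calling the base case ``the transient components at the bottom'': a bottom component may be of any of the four types (with $\theta,\theta'$ then Boolean combinations of $\top,\bot$), but your inductive step already covers this degenerate situation, so nothing is missing.
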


The other direction of the translation is classic and can be obtained by adapting in a straightforward way the proof of \cite [ Section 14.7.2]{DGL16}.

\begin{theorem} \label{translation of logic to 2HLGT}
    Given a \polcCTLp formula $\varphi$ one can construct a 2HLGT $\mathcal{A}_\varphi$ such that $\mathcal{L}(\varphi) = \mathcal{L}(\mathcal{A}_\varphi)$.
\end{theorem}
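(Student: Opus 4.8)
The plan is to adapt the classical inductive construction that compiles a branching-time temporal formula into an alternating (hesitant, weak) automaton, using one component per subformula of $\varphi$, and to check at each step that the syntactic restrictions of \polcCTLp force every component into one of the four admissible hesitant types (transient, existential, universal, upward) and the resulting partition into singletons (so that we indeed land in a 2HLGT, not merely a 2HWGT). Concretely, I would first put $\varphi$ in a normal form where negations are pushed to the atomic propositions, using the derived dual operators available in \polcCTLp ($\A\X$, $\A\G$, $\A(\varphi\R\varphi')$, $\Y$, the dual of $\S$, etc.); the key point, to be verified, is that \polcCTLp is closed under this negation normal form precisely because its grammar is ``polarised'', so no forbidden construct such as $\E\G$ or $\A\F$ is created. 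Then $Q$ is the set of subformulas of the normal form (plus possibly an extra state for a positive Boolean initial condition, as allowed by the $\mathcal A^\theta$ construction), $q_I$ is $\varphi$ itself, and the order $\le$ on components is the subformula order.

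Next I would define $\delta$ by a case analysis on the outermost operator of each state/subformula, reading off the standard fixpoint unfoldings: a Boolean combination yields a transient state whose transition is the corresponding positive Boolean formula over the immediate subformula-states (after distributing negations); $\E\X\psi$ yields a transient state with transition $(\diamond,q_\psi)$; $\A\X\psi$ (i.e.\ $\neg\E\X\neg\psi$) yields a transient state with transition $(\square,q_\psi)$; $\E(\psi\U\psi')$ yields an existential state $q$ with transition $\chi(q_{\psi'})\vee(\chi(q_\psi)\wedge(\diamond,q))$; $\A(\psi\W\psi')$ (the derived safety operator, $\neg\E(\neg\psi'\U\neg\psi\wedge\neg\psi')$ suitably arranged) yields a universal state with transition $\chi(q_{\psi'})\vee(\chi(q_\psi)\wedge(\square,q))$; $\Y\psi$ yields an upward state with transition $(-1,q_\psi)$; and $\psi\S\psi'$ yields an upward state with transition $\chi(q_{\psi'})\vee(\chi(q_\psi)\wedge(-1,q))$, reading the ``since'' as a fixpoint along the unique parent. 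Priorities are assigned by the hesitant convention: universal components even, existential components odd, transient and upward components trivially (their self-loop-free or upward-only nature makes the weak condition automatic along any maximal path). One checks that each such component contains exactly one state, satisfies the relevant hesitant shape (diamonds only and disjunctively for existential, boxes only and conjunctively for universal, $(-1,\cdot)$ only for upward, no self-occurrence for transient), and that all other atoms in its transition point to strictly lower components — this is exactly the structure of the construction in \cite[Section 14.7.2]{DGL16}, which I am adapting.

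Finally I would state correctness, $\mathcal L(\mathcal A_\varphi)=\mathcal L(\varphi)$, and prove it by structural induction on $\varphi$, matching accepting runs of $\mathcal A_\varphi^{q_\psi}$ from a node $n$ with the truth of $\psi$ at $n$; the $\U$/$\S$ cases use that an existential (resp.\ upward) component is rejecting, so a run staying forever in it is unsuccessful, which is precisely what enforces the ``eventually'' semantics, while the $\W$/$\G$ case uses that a universal component is accepting, enforcing the ``always unless'' semantics. The main obstacle I anticipate is not the automaton construction itself but the bookkeeping for the \emph{two-way} behaviour together with the weak/hesitant acceptance: one must argue that upward-moving branches of a run cannot create spurious infinite back-and-forth paths that would escape the intended acceptance (here the linearity/singleton restriction and the fact that upward states only emit $(-1,\cdot)$ atoms, never re-entering a higher component, keeps each maximal path eventually trapped in a single existential or universal component), and one must handle the root flag correctly so that $\Y\psi$ and $\psi\S\psi'$ evaluate to the right thing at $\epsilon_T$ (the $(-1,\cdot)$ atom is false there, matching the $i>0$ and $j\le i$ conditions in the semantics). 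Getting these two interactions right — two-wayness versus weakness, and the boundary behaviour at the root — is where the real care is needed; everything else is the routine ``Fischer–Ladner''-style compilation.
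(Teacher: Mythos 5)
Your proposal is correct and follows essentially the same route as the paper, whose proof of this direction consists of invoking the classical subformula-based compilation of \cite[Section 14.7.2]{DGL16}: your construction (one singleton component per subformula in negation normal form, transient/existential/universal/upward typing, fixpoint unfoldings for until, weak-until and since, priorities by component type) is precisely the intended adaptation. The two points of care you flag --- that upward-moving run branches must terminate at the root so no back-and-forth path escapes the weak acceptance condition, and the treatment of the root flag for the past operators and their duals --- are exactly the non-routine details the paper leaves implicit.
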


Clearly, the equivalence between 2HLGT and \polcCTLp is allowed by backward transitions. If one removes the possibility of going ``up" the input tree, one also loses the expressive power given by the past operators of the logic. By inspecting the translations, it is clear that the two-way head movements and the past temporal operators are intertwined, and that if one removes them both from the two formalisms there is still a possibility of translations between the formalisms thus obtained, yielding the following lemma.
\begin{lemma}
    HLGT and \polcCTL are equivalent formalisms.
\end{lemma}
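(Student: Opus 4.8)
The plan is to derive both directions by specialising the two translations already established between \polcCTLp and 2HLGT, exploiting the observation that removing past operators from the logic corresponds exactly to removing backward transitions — and hence upward components — from the automata.

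\emph{From HLGT to \polcCTL.} First I would observe that a one-way HLGT is nothing but a 2HLGT whose transition function never mentions the atom $-1$, so that it has no upward component: all of its (singleton) components are transient, existential, or universal. Consequently the translation from 2HLGT to \polcCTLp established above applies unchanged, yielding a formula $\varphi_{\mathcal A} = \chi(q_I)$ with $\mathcal{L}(\mathcal A) = \mathcal{L}(\varphi_{\mathcal A})$ by \Cref{correctness of translation from 2HLGT to logic}. It then remains to verify that $\chi(q_I)$ contains no past operator, which I would do by inspecting that construction: atoms are translated as $\D^k\chi(q)$ or $\neg\D^k\neg\chi(q)$ (the clause $(-1,q)=\Y\chi(q)$ never applies), and a single state is translated as $\gamma'_q$, $\E(\gamma_q\U\gamma'_q)$, or $\A(\gamma_q\W\gamma'_q)$ according to its type. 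None of these uses $\Y$ or $\S$, and $\A(\cdot\,\W\,\cdot)$ is a bona fide \polcCTL abbreviation, since $\A(\alpha\W\beta) = \neg\E(\neg\beta\U(\neg\alpha\wedge\neg\beta))$. A straightforward induction on the well-founded order of the singleton components then gives $\chi(q)\in\polcCTL$ for every $q$, hence for $q_I$.

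\emph{From \polcCTL to HLGT.} For the converse I would revisit the construction of \Cref{translation of logic to 2HLGT}, which turns a \polcCTLp formula into a 2HLGT by adapting \cite[Section 14.7.2]{DGL16}. The only clauses of that construction that create an upward component — equivalently, a transition using the atom $-1$ — are those dealing with the past operators $\Y$ and $\S$. Since a \polcCTL formula contains neither, running the same construction on it produces a 2HLGT with only transient, existential, and universal components and no backward transition, i.e.\ a genuine one-way HLGT; linearity is inherited because the underlying construction already yields singleton components. Hence $\mathcal A_\varphi$ is an HLGT with $\mathcal{L}(\varphi) = \mathcal{L}(\mathcal A_\varphi)$.

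\emph{Main obstacle.} The substance of the proof lies entirely in two bookkeeping facts that must be checked carefully: (i) that the HLGT-to-logic translation never emits a past operator, which ultimately hinges on expressing the universal-state formula $\A(\gamma_q\W\gamma'_q)$ in \polcCTL without appealing to the non-primitive operators $\A\F$, $\A\U$, or $\E\G$; and (ii) that in the logic-to-automaton direction an upward component is introduced precisely — and only — when translating a subformula rooted at a past operator. Neither requires an idea beyond those already used for the full 2HLGT $\leftrightarrow$ \polcCTLp equivalence; the care is in confirming that both constructions are modular in exactly the past/backward fragment, so that deleting it on one side is mirrored by deleting it on the other.
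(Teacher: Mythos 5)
Your proposal is correct and follows essentially the same route as the paper, which obtains the lemma precisely by observing that the two translations between 2HLGT and \polcCTLp are modular in the past/backward fragment, so that deleting the $-1$ atoms on the automaton side is mirrored by deleting $\Y$ and $\S$ on the logic side. Your write-up is in fact more careful than the paper's one-sentence justification, notably in checking that $\A(\gamma_q\W\gamma'_q)$ remains expressible in \polcCTL via $\neg\E(\neg\gamma'_q\U(\neg\gamma_q\wedge\neg\gamma'_q))$.
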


By combining this with Propostion \ref{polCTL stricly less than polCTLp}, one obtains that HLGT are strictly less expressive than \FO. The following conjecture implies that the class of Antisymmetric AWA introduced in \cite{For19} and above mentioned is strictly less expressive than \FO over trees.

\begin{conjecture}
HLGT and Antisymmetric Additive-Weak Parity Automata are equivalent
formalisms.
\end{conjecture}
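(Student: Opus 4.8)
The plan is to prove the equivalence by mutual, effective, language-preserving translation, factoring it through the (yet unproved but structurally evident) equivalence between the unrestricted models HWGT and AWA. First I would fix a precise dictionary between the two formalisms. An AWA of \cite{Car15} comes equipped with a preorder on its states together with an ``additive'' weak acceptance discipline; the mutual-reachability classes of that preorder are exactly the hesitant components of a HWGT, and the induced partial order is the $\leq$ of the HWGT. The additive constraint --- existential (disjunctive) branching only below least-fixpoint/rejecting states and universal (conjunctive) branching only below greatest-fixpoint/accepting states --- is precisely the existential/universal/transient typing of a HWGT, and both acceptance conditions are weak, so a maximal path is accepting iff the component in which it stabilises is a universal one. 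What remains after this alignment is to reconcile the transition syntax, and in particular the graded modalities $\diamond_k,\square_k$ that HWGT carries.

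That reconciliation is where I expect the main obstacle. If the AWA of \cite{Car15} already features graded (counting) transitions, the unrestricted equivalence HWGT $\equiv$ AWA is little more than bookkeeping. If instead AWA is ungraded, each atom $(\diamond_k,q')$ (resp.\ $(\square_k,q')$) must be simulated by a gadget that checks $k$ distinct successors, which introduces auxiliary states, and the delicate point is to place them without creating new non-trivial strongly connected components. Here the hesitant constraint is decisive: in a linear HWGT, an atom $(\diamond_k,q')$ with $k>1$ can occur in $\delta(q,\sigma)$ only when $q'$ is strictly below $q$ in $\leq$ (inside a singleton existential or universal component the grading is forced to $1$), so the gadget's states can be inserted strictly between $q'$ and $q$ and given transient type, preserving weakness, the order structure, and --- crucially for what follows --- the absence of non-trivial mutual reachability. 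The case of $\square_k$ is dual.

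With HWGT $\equiv$ AWA in place, the restriction transfers cleanly. ``Linear'' for a HWGT means every hesitant component is a singleton, i.e.\ no two distinct states are mutually reachable; under the dictionary above this is precisely that the preorder of the associated AWA is antisymmetric, the defining property of the Antisymmetric AWA of \cite[Definition 3.3.1]{For19}. Hence: starting from a HLGT, the forward translation together with the transient-gadget placement above yields an AWA whose mutual-reachability classes are all singletons, hence an Antisymmetric AWA; conversely, starting from an Antisymmetric AWA, the backward translation yields a HWGT whose components are mutual-reachability classes of an antisymmetric preorder, hence singletons, i.e.\ a HLGT. Combined with the Lemma above (HLGT $\equiv$ \polcCTL) and \Cref{polCTL stricly less than polCTLp} (so that \polcCTL is strictly below \FO), this would at once confirm the conjecture and settle the open question of \cite{For19} in the negative: Antisymmetric AWA would be strictly less expressive than \FO over trees, contradicting the equivalence conjectured there.

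A prerequisite step before any of the above is to pin down the external definitions: one must verify that the ``Antisymmetric AWA'' of \cite[Def.~3.3.1]{For19} coincides with (or differs only cosmetically from) the ``antisymmetric path parity automata'' mentioned in the introduction, and that \cite{For19} indeed works over the same unranked, unordered, finitely branching infinite trees used here --- as the introduction states --- so that no change-of-model translation is needed. I would regard matching these definitions, and fixing the exact transition format of AWA, as step zero of the proof.
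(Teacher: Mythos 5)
You should first note that the paper does not prove this statement: it is stated as an open \emph{conjecture}, introduced only with the remark that it would imply that Antisymmetric AWA are strictly less expressive than \FO, and it is explicitly listed again in the conclusions as future work. So there is no proof in the paper to compare your attempt against; your proposal has to stand on its own. On its own, it is a reasonable roadmap (align the preorder/mutual-reachability classes of an AWA with the hesitant components of a HWGT, observe that ``every component is a singleton'' corresponds to antisymmetry of the preorder, and transfer the restriction through an unrestricted equivalence HWGT\,$\equiv$\,AWA), and your final observation --- that the conjecture together with the lemma HLGT\,$\equiv$\,\polcCTL and \Cref{polCTL stricly less than polCTLp} would refute Ford's conjectured equivalence of Antisymmetric AWA with \FO --- is exactly the consequence the paper draws. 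But the proposal defers essentially all of the mathematical content: the ``dictionary'' between the two transition formats, the unrestricted equivalence itself, and the verification that your gadgets preserve the additive discipline are all flagged as steps to be done rather than done.

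There is also one concrete step that would fail as written. In the branch where the AWA of \cite{Car15} are taken to be ungraded, you propose to simulate an atom $(\diamond_k,q')$ with $k>1$ by a gadget of transient auxiliary states inserted between $q$ and $q'$. No such gadget can exist: an alternating automaton whose one-step conditions use only ungraded $\diamond$ and $\square$ atoms is invariant under bisimulation, and over unordered trees the property ``at least $k$ distinct successors satisfy $q'$'' separates bisimilar trees (e.g.\ a root with one $p$-child versus a root with two $p$-children). Hence the distinctness requirement in the semantics of $\diamond_k$ cannot be recovered by any arrangement of ungraded states, transient or otherwise. In fact the AWA of \cite{Car15} obtain their counting power from a first-order one-step language rather than from plain modal atoms, which means your ``step zero'' of fixing the exact transition format is not a cosmetic preliminary but the place where the real work of the equivalence lives: one must show that the one-step first-order conditions of an (antisymmetric) AWA and the graded atoms $\diamond_k,\square_k$ of a (linear) HWGT are interconvertible \emph{without} disturbing the existential/universal typing of components or creating new non-singleton components. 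Until that is done, the conjecture remains open, as the paper itself acknowledges.
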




\section{Automaton-based Characterization of \texorpdfstring{\cCTL*[f]}{cCTLf*}}
In this section, we complete the picture of logics-vs-automata correspondences by
showing a class of automata, in fact, restrictions of HWGTs, that are expressively 
equivalent to \cCTL*[f]. 
We will follow the approach from \cite{BBMP24}, with minor cosmetic changes.
We shall mostly focus on the translation from the restriction of HWGTs 
(to be defined soon) to equivalent \cCTL*[f] formulas, which is the interesting
part of the proof.

\medskip

\noindent{\bf Linearization of components.}
The first step consists in focusing on the non-transient components 
of an HWGT, and thinking of them as suitable word automata over an expanded alphabet.
We remark that in this step, we move from automata that process trees to automata
that process words. This abstraction is possible thanks to the restrictions
imposed to an HWGT.

\begin{definition} \label{Linearization}
Let $\cA = \langle Q, \Sigma, \delta, q_I, (Q_\forall,Q_\exists) \rangle$ 
be an HWGT and let $B$ be the set of all atoms that occur in the images 
of its transition function.
For each component $Q_i$ of $\cA$ and each state $q\in Q_i$, let $q_{\mathrm{exit}}$
be a fresh state and let $\cA_{Q_i, q} = \langle Q_i\uplus\{q_{\mathrm{exit}}\}, \Sigma \times 2^B, \delta', q, Q_i \rangle$
be the word automaton obtained from $\cA$ by restricting the state set to $Q_i$ and adding $q_{\mathrm{exit}}$, 
declaring $q$ to be the new initial state, annotating the input letters with subsets of $B$, 
and redefining the transition relation and the acceptance condition as follows:
\begin{compactenum}
    \item if $Q_i$ is an existential component, then $\cA_{Q_i, q}$ is a 
          non-deterministic coBüchi automaton (NCA),
          with $Q_i$ as set of rejecting states 
          (implying that the only successful runs are those that eventually exit $Q_i$), and with transitions defined by 
          \begin{compactitem}
            \item $q_{\mathrm{exit}} \in \delta'(q',(\sigma,C))$ (for $q'\in Q_i\uplus\{q_{\mathrm{exit}}\}$, $\sigma\in\Sigma$, $C\subseteq B$) 
                  whenever $q'=q_{\mathrm{exit}}$ or 
                  the disjunctive normal form of $\delta(q',\sigma)$ contains a clause of the form $\bigwedge C$
                  (note that, thanks to the restrictions imposed to an HWGT,
                  the states appearing in $C$ have all lower order than those in $Q_i$);
            \item $q'' \in \delta_{Q_i}(q', (\sigma,C))$ (for $q',q''\in Q_i$, $\sigma\in\Sigma$, $C\subseteq B$) 
                  whenever the disjunctive normal form of $\delta(q',\sigma)$ contains
                  a clause of the form $(\diamond, q'') \land \bigwedge C$;
          \end{compactitem}
    \item dually, if $Q_i$ is a universal component, then $\cA_{Q_i, q}$ is a universal Büchi automaton (UBA),
          with $Q_i$ as set of accepting states 
          (implying that the successful runs are those that remain inside $Q_i$), and with transitions defined by 
          \begin{compactitem}
            \item $q_{\mathrm{exit}} \in \delta'(q',(\sigma,C))$ (for $q'\in Q_i\uplus\{q_{\mathrm{exit}}\}$, $\sigma\in\Sigma$, $C\subseteq B$) 
                  whenever $q'=q_{\mathrm{exit}}$ or 
                  the conjunctive normal form of $\delta(q',\sigma)$ contains a clause of the form $\bigvee C$;
            \item $q'' \in \delta_{Q_i}(q', (\sigma,C))$ (for $q',q''\in Q_i$, $\sigma\in\Sigma$, $C\subseteq B$) 
                  whenever the conjunctive normal form of $\delta(q',\sigma)$ contains
                  a clause of the form $(\square, q'') \vee \bigvee C$.
          \end{compactitem}
\end{compactenum}
We denote by $B_{Q_i}$ the set of subsets $C$ of $B$ for which there is $q'\in Q_i$
with $\delta'(q', (\sigma,C)) \cap Q_i \neq \emptyset$ -- intuitively, these are the annotations
that allow the automaton $\cA_{Q_i,q}$ to enter a lower component.
\end{definition}

The word languages recognized by the $\cA_{Q_i,q}$'s are quite simple.
Specifically, when $Q_i$ is an existential component, $\cA_{Q_i,q}$ accepts by exiting $Q_i$, 
implying that it recognizes a \emph{co-safety language},
namely, a language of the form $F\Sigma^\omega$, for some finite $F\subseteq\Sigma^*$.
Dually, when $Q_i$ is universal, $\cA_{Q_i,q}$ accepts by staying inside $Q_i$,
implying that it recognizes a \emph{safety language}, i.e.~the complement of a
co-safety language.
Summing up, with the above definition we have achieved two crucial goals.
First, we moved from tree automata to word automata, thus enabling the use
of characterizations of subclasses of languages.
Second, we encoded the behaviour of each component as a (co-)safety language,
which is essentially a property referred to \emph{finite} prefixes. 
This justifies the existence of a translation from (a suitable restriction of) HWGTs
to \cCTL*[f], the variant of $\cCTL*$ with quantification on finite paths.

\medskip

\noindent{\bf Counter-free word automata.}
The next step is to bound the expressive power of these word automata to get \FO expressiveness, that over words is equivalent to \LTL, as already remarked. The goal is then to translate each automaton $\cA_{Q_i,q}$ to an 
equivalent \LTL formula and, combining together the formulas thus obtained, we will get 
a single \cCTL*[f] formula that defines the entire language of trees 
recognized by the original HWGT.

Of course not all (co-)safety languages are \LTL-definable. 
This means that, in order to enable the desired translation, 
we need to first identify which HWGTs produce component automata 
$\cA_{Q_i,q}$ that are within the expressiveness of \LTL. 
For this, we shall rely on the well-known characterization of \LTL 
in terms of languages recognized by counter-free automata \cite[Theorem 1.1]{DG08a}.
In view of this, it is tempting to simply require that all non-transient
components of the given HWGT induce counter-free word automata $\cA_{Q_i,q}$:
this would indeed guarantee that each $\cA_{Q_i,q}$ translates to an
equivalent \LTL formula.
However, a straightforward adaptation of \cite[Example 5.1]{BBMP24} 
shows an HWGT in which all non-transient components induce
counter-free word automata, and yet the language recognized by the HWGT
is not definable in \FO.
In particular, this means that the counter-free condition on the components 
will not guarantee the possibility of combining together the \LTL 
formulas obtained in the previous step so as to obtain a single \cCTL*[f] 
equivalent to the HWGT.
However, in \cite{BBMP24} it is also shown that one remains in \FO if,
in addition to the counter-free condition on the components,
a mutual exclusion property is also enforced:


\begin{definition} \label{mutual exclusion}
An HWGT $\mathcal{A}$ satisfies the \emph{mutual exclusion} property 
if for every non-transient component $Q_i$ and every $C \neq C' \in B_{Q_i}$, 
there are atoms $\alpha\in C$ and $\beta \in C'$ such that 
$\mathcal{L}(\mathcal{A}^\alpha)$ is the complement of 
$\mathcal{L}(\mathcal{A}^\beta)$.
\end{definition}

We denote by HWGT$_{cf}$ the subclass of HWGTs that satisfy the mutual exclusion
property and the counter-free condition on the word automata induced by non-transient components.
Accordingly, from now on, we assume that $\cA$ is an HWGT$_{cf}$.

Let us now inspect the translation from the word automata 
$\cA_{Q_i,q}$'s to equivalent \LTL formulas. 
Since the $\cA_{Q_i,q}$'s recognize safety or co-safety languages, 
it is useful to first recall the \emph{safe} and \emph{co-safe fragments} of \LTL,
that capture precisely these languages within \LTL. 
The safe (resp., co-safe) fragment of $\LTL$, denoted \SafeLTL~(resp., \CosafeLTL), 
allows the usual atomic propositions and Boolean connectives (except for negation), the next operator $\X$, 
and the release operator $\R$ (resp., the until operator $\U$).
%
%
On the automaton side, we also recall that \SafeLTL~(resp., \CosafeLTL) languages are recognized
precisely by the so-called \emph{looping}, counter-free deterministic Büchi (resp., deterministic coBüchi) automata
\cite{BLS22}, where looping means that all states are accepting (resp., rejecting), with the  exception of a single sink state. Building upon this result, the following can be obtained with an adaptation of the argument in \cite[Theorem 5.1]{MH84}:  
%
%
%
\begin{lemma} \label{equivalence of automata with safetyltl} \sloppypar
\SafeLTL~(resp., \CosafeLTL) and counter-free, looping, universal Büchi (resp., non-deterministic coBüchi) 
automata are expressively equivalent formalisms.
\end{lemma}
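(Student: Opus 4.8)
## Proof proposal for Lemma~\ref{equivalence of automata with safetyltl}

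The plan is to establish the two directions of the expressive equivalence separately, treating the \SafeLTL/universal-Büchi case in detail and obtaining the \CosafeLTL/nondeterministic-coBüchi case by dualisation (complementation swaps safety with co-safety, universal with existential branching, and Büchi with coBüchi, so a single argument suffices). Throughout, I would lean on two facts recalled in the excerpt: first, that \SafeLTL (resp.\ \CosafeLTL) captures exactly the safety (resp.\ co-safety) languages that are first-order, equivalently \LTL-definable; second, that such languages are recognised precisely by \emph{looping}, counter-free \emph{deterministic} Büchi (resp.\ coBüchi) automata, after \cite{BLS22}.

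For the inclusion ``\SafeLTL $\subseteq$ counter-free looping UBA'', the shortest route is to observe that a deterministic automaton is in particular a universal one (its unique run is the only run), so the deterministic looping counter-free Büchi automaton provided by \cite{BLS22} already witnesses membership of every \SafeLTL language in the universal class. Hence this direction is essentially free and the real content is the converse.

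For the converse, ``counter-free looping UBA $\subseteq$ \SafeLTL'', I would argue that such an automaton $\cA$ recognises a safety language (the looping shape, with the single non-accepting sink, makes this immediate: a word is rejected exactly when \emph{some} run hits the sink, and once in the sink it stays, so rejection is witnessed by a finite bad prefix). Being a safety language, it suffices to show it is \LTL-definable, and for that I would appeal to the counter-free characterisation of \LTL over words \cite[Theorem 1.1]{DG08a}: the point is to turn the \emph{universal} counter-free automaton into an \emph{equivalent} (deterministic or nondeterministic) counter-free automaton for the same language, so that \cite{DG08a} applies directly. The natural tool is the breakpoint/Miyano--Hayashi construction — which is exactly the ``adaptation of the argument in \cite[Theorem 5.1]{MH84}'' the statement alludes to — complementing a universal Büchi automaton into a nondeterministic one (or, via the subset construction on the complement, a deterministic one); one then checks that this construction \emph{preserves counter-freeness}. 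Once an equivalent counter-free nondeterministic (co-)Büchi automaton is in hand, \cite{DG08a} yields an \LTL formula, and since the language is co-safety (in the dual reading) or safety, it in fact lies in the \CosafeLTL (resp.\ \SafeLTL) fragment by the fragment-capturing result cited above.

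The main obstacle is precisely the preservation of counter-freeness under the Miyano--Hayashi complementation step. The subset/breakpoint construction introduces macrostates (subsets of $Q$, possibly with a flag), and a loop on a word $\iota^n$ in the new automaton need not project to a loop on $\iota$ in $\cA$ unless one controls how cycles of subsets interact: one must show that if the macrostate automaton has a run on $\iota^{n}$ from a macrostate $S$ back to $S$, then it already has one on $\iota$ from $S$ to $S$. This follows from the fact that the transition monoid of the subset automaton embeds into (a quotient of) the transition monoid of $\cA$ acting on $2^Q$, and counter-freeness of $\cA$ — aperiodicity of its transition monoid — is inherited by subsemigroups and quotients; the looping shape is needed to handle the breakpoint flag without breaking aperiodicity. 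Making this monoid argument precise, and verifying that the looping/sink structure is maintained so that the resulting automaton still recognises a genuine safety language rather than a more general $\omega$-regular one, is where the work lies; everything else is bookkeeping and the invocation of \cite{BLS22,DG08a}.
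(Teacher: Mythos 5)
Your proposal follows essentially the same route as the paper, which proves this lemma only by remarking that it follows from the \cite{BLS22} characterisation of \SafeLTL{} via looping, counter-free, \emph{deterministic} Büchi automata together with an adaptation of the Miyano--Hayashi argument \cite{MH84}: that is exactly your plan of reading the deterministic automaton as a universal one for the easy inclusion, and converting a counter-free looping universal automaton back via the breakpoint/subset construction (checking preservation of counter-freeness) before invoking \cite{DG08a} and the safety/co-safety fragment results for the converse. The preservation of counter-freeness under that conversion, which you correctly isolate as the only substantive step, is left at the same level of detail by the paper itself.
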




\subsection{Equivalence of HWGT\texorpdfstring{$_{cf}$}{cf} and \texorpdfstring{\cCTL*[f]}{cCTLf*}}

In this section, we prove the equivalence of HWGT$_{cf}$ and the logic \cCTL*[f]. 
This equivalence 
yields a normal form for \cCTL*[f] formulas, which turns out to be 
a sort of \emph{polarized} \cCTL*.

\begin{theorem} \label{translation of HWGT in logic}
    Let $\mathcal{A}$ be a HWGT$_{cf}$ over the alphabet $\Sigma = 2^{AP}$. Then, one can construct a \cCTL*[f] formula $\varphi_\mathcal{A}$ such that $\mathcal{L}(\mathcal{A}) = \mathcal{L}(\varphi_\mathcal{A})$.
\end{theorem}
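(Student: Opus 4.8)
The plan is to translate an HWGT$_{cf}$ $\mathcal{A}$ into an equivalent \cCTL*[f] formula by processing the component DAG bottom-up, producing for each state $q$ of $\mathcal{A}$ a \cCTL*[f] formula $\chi(q)$ such that $\mathcal{L}(\mathcal{A}^q)=\mathcal{L}(\chi(q))$, and then taking $\varphi_\mathcal{A}=\chi(q_I)$. The induction is on the partial order $\leq$ on the components: for a transient component the transition function $\delta(q,\sigma)$ only mentions states in strictly lower components, so $\chi(q)$ is obtained by substituting, in $\delta(q,\sigma)$, each atom $(\diamond_k,q')$ by $\D^k\chi(q')$ and each $(\square_k,q')$ by $\neg\D^k\neg\chi(q')$, and guarding by $\psi_\sigma$ as in the \polcCTLp case; since \cCTL*[f] is closed under Boolean connectives this is immediate. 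The real work is in the existential and universal components, for which I invoke the linearization of \Cref{Linearization}: the component $Q_i$ with entry state $q$ becomes the word automaton $\cA_{Q_i,q}$ over the expanded alphabet $\Sigma\times 2^B$, recognizing a co-safety (resp.\ safety) language, which by the counter-free hypothesis and \Cref{equivalence of automata with safetyltl} is definable by a \CosafeLTL (resp.\ \SafeLTL) formula $\eta_{Q_i,q}$ over that expanded alphabet.

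The next step is to turn $\eta_{Q_i,q}$ back into a genuine \cCTL*[f] formula over $\Sigma$. The letters of the expanded alphabet carry, besides $\sigma\in\Sigma$, a subset $C\in B_{Q_i}$ recording which lower-component exit is taken; each such $C$ is a conjunction $\bigwedge C$ of atoms $(\diamond_k,q')$ / $(\square_k,q')$ with $q'$ in strictly lower components, hence by induction translatable into the \cCTL*[f] formula $\chi(\bigwedge C)$. So I define a substitution that replaces, inside $\eta_{Q_i,q}$, each proposition ``current letter is $(\sigma,C)$'' by $\psi_\sigma\wedge\chi(\bigwedge C)$, while keeping the temporal operators $\X,\R$ (resp.\ $\X,\U$) of the safe/co-safe fragment. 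Then $\chi(q)$ is obtained by prefixing a path quantifier: $\chi(q)=\E\,\widehat{\eta_{Q_i,q}}$ for an existential component and $\chi(q)=\A\,\widehat{\eta_{Q_i,q}}$ for a universal one. Here the match between finite-path quantification and the polarity of the component is exactly what makes the translation correct: an existential component accepts a tree branch by \emph{exiting} $Q_i$ after finitely many steps (co-safety), and $\E$ over finite paths in \cCTL*[f] can witness precisely such a finite prefix; dually a universal component accepts by \emph{staying} in $Q_i$ forever (safety), and $\A$ over all finite prefixes captures exactly the safety closure. This is where \Cref{ctlf equivalences} is used in spirit: finite path quantification semantically enforces the polarized behaviour, so a \CosafeLTL body under $\E$ and a \SafeLTL body under $\A$ suffice, with no need for the nested $\U$/$\R$ that \polcCTLp lacks.

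The correctness argument then proceeds by induction on $\leq$, showing $\mathcal{L}(\mathcal{A}^q)=\mathcal{L}(\chi(q))$. The key bridge is a run-decomposition lemma: a successful run of $\mathcal{A}^q$ on a tree $\mathcal{T}$, restricted to a branch that currently sits in component $Q_i$, projects to a run of $\cA_{Q_i,q}$ on the branch labelled by $(\sigma_j,C_j)_j$, where $C_j$ is the set of lower-component atoms fired at step $j$; conversely, each such $C_j$-labelled step can be expanded, using the inductive hypotheses $\mathcal{L}(\mathcal{A}^{q'})=\mathcal{L}(\chi(q'))$ for $q'$ in $C_j$, into the subtree-witnesses required by $\mathcal{A}$. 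Here the \emph{mutual exclusion} property (\Cref{mutual exclusion}) is essential: it guarantees that the annotations $C$ driving distinct transitions are semantically incompatible on any given branch, so the nondeterministic choice of annotation in $\cA_{Q_i,q}$ corresponds to a \emph{deterministic}, tree-determined choice in $\mathcal{A}$ — without this, a branch could be spuriously accepted by choosing an annotation whose subtree-obligations are not actually met, and the resulting \cCTL*[f] formula would be too permissive. The main obstacle, and the place where care is needed, is precisely the management of these annotations: making the substitution $\widehat{(\cdot)}$ well-defined when several pairs $(\sigma,C)$ coexist, ensuring that the safe/co-safe structure of $\eta_{Q_i,q}$ is preserved after substituting the (arbitrary, boolean) formulas $\chi(\bigwedge C)$ — which is fine because the fragments forbid only negation at the \emph{temporal} level, not inside the propositional atoms — and verifying that mutual exclusion plus counter-freeness together close the gap exhibited by the adaptation of \cite[Example 5.1]{BBMP24}. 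Once the branch-wise equivalence is in place, quantifying over all branches of $\mathcal{T}$ (the definition of $\mathcal{T}\models\varphi$) yields $\mathcal{L}(\mathcal{A})=\mathcal{L}(\varphi_\mathcal{A})$ at the initial state, completing the proof.
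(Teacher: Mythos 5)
Your proposal follows essentially the same route as the paper: a bottom-up induction on the component order, producing $\chi(q)$ for each state; linearization of each non-transient component into a counter-free word automaton over the annotated alphabet $\Sigma\times 2^B$; conversion to a \SafeLTL/\CosafeLTL formula via \Cref{equivalence of automata with safetyltl}; substitution of the inductively obtained formulas $\chi(C)$ for the annotations (with mutual exclusion justifying the treatment of the negated annotation-propositions); and a final path quantifier matching the polarity of the component.

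One point where your formula diverges from the paper's and where a genuine issue lurks: you set $\chi(q)=\A\,\widehat{\eta_{Q_i,q}}$ (resp.\ $\E\,\widehat{\eta_{Q_i,q}}$) directly, whereas the paper's formula for a universal component is $\A\bigl(f(\psi_B) \W \bigvee_{C\in B_{Q_i}}\chi(C)\bigr)$. The extra weak-until is not cosmetic: a branch of an accepting run may be \emph{discharged} at some node where the transition fires a clause $\bigvee C$ containing no $\square$-atom with a $Q_i$-state, after which the branch carries no further obligation from $Q_i$; the word automaton $\cA_{Q_i,q}$ sends such runs to $q_{\mathrm{exit}}$, which is non-accepting for the universal Büchi condition, so this release is \emph{not} part of the safety word language and hence not captured by $\eta_{Q_i,q}$ alone. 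Your run-decomposition lemma only discusses branches that project onto full runs of $\cA_{Q_i,q}$, so as stated your $\A\,\widehat{\eta_{Q_i,q}}$ is too strong (it demands the safety constraint hold forever along every branch, even past a releasing node). The fix is exactly the paper's: guard the substituted \SafeLTL body with a weak until whose right operand is the disjunction of the exit formulas $\chi(C)$ (dually, for existential components, an until/co-safety closure of the exit disjunction). With that correction the two arguments coincide.
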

\begin{proof}
   Let $\mathcal{A} = \langle Q, \Sigma, \delta, q_I, (Q_\forall,Q_\exists) \rangle$ be a HWGT$_{cf}$. For every state $q \in Q$, we can obtain a \cCTL*[f] formula $\chi(q)$ such that $\mathcal{L}(\mathcal{A}^q) = \mathcal{L}(\chi(q))$ and the theorem follows by setting $q$ to be $q_I$. The proof is by induction on the order of the components. Moreover, for every $\sigma \in \Sigma$, we denote by $\psi_\sigma$ the \cCTL*[f] formula: $\bigwedge\nolimits_{\text{$\apElm \in \sigma$}} \apElm \land \bigwedge\nolimits_{\text{$\apElm \notin \sigma$}} \neg\apElm$.
So, let's fix a component $Q_i$, and let's say $q \in Q_i$. We consider the case in which $Q_i$ is universal: the transient case is straightforward and the existential case can be restored by a dualization argument. Thus, suppose $Q_i$ is universal. Then, we can focus on the counter-free looping UBA $\mathcal{A}_{Q_i, q} = \langle Q_i \uplus \{\mathrm{q_{exit}}\}, 2^{AP} \times B_{Q_i},\delta_{Q_i}, q, Q_i\rangle$, as given in Definition \ref{Linearization}.  Moreover, since every $C \in B_{Q_i}$ is composed of atoms whose states belong to components of lower order than $i$, by induction hypothesis we can assume to have a \cCTL*[f] formula $\chi(C)$ for every  $C \in \Gamma_{Q_i}$, such that 
$\mathcal{L}(\mathcal{A}^{\bigvee C}) = \mathcal{L}(\chi(C ))$. 
Since $\mathcal{A}$ satisfies the mutual exclusion property, it also holds that for every distinct 
$C, C' \in \Gamma_{Q_i}$, $\mathcal{L}(\chi(C)) \cup \mathcal{L}(\chi(C')) = \mathcal{T}_\Sigma$.

       For every $C \in B_{Q_i}$, we define a fresh atomic proposition $p_C$. We denote by $AP_B$ the union of $AP$ with all these new atomic propositions. Now, we can see the Büchi automaton $\mathcal{A}_{Q_i, q}$ as $\mathcal{A}_B$ = $\langle Q_i \uplus \{\mathrm{q_{exit}}\}, 2^{AP_B},\delta_B, q, Q_i\rangle$. Basically, we just hide $B_{Q_i}$ in the alphabet. The only thing that is modified from $\mathcal{A}_{Q_i, q}$ to $\mathcal{A}_B$ is the transition function, i.e., $\delta_B(q, \sigma_C) =\delta_{Q_i}(q,(\sigma,C))$ if $\sigma_C = \sigma \cup p_C$ (where $\sigma \subseteq AP$), otherwise $\delta_B(q, \sigma_C) = \emptyset$. Clearly, these little changes do not modify any property of the starting automaton, so $\mathcal{A}_B$ is still counter-free and looping. By Lemma \ref{equivalence of automata with safetyltl}, we can assume to have a \SafeLTL formula $\psi_B$, such that $\mathcal{L}(\mathcal{A}_B) = \mathcal{L}(\psi_B)$. Now, we have to find a way to link this formula to the tree language defined by the automaton. First, we note that the following holds by construction.

       \emph{Fact.} Let $\mathcal{A}$ be an HWGT$_{cf}, Q_i$ be a universal component of $\mathcal{A}$ and $q \in Q_i$. Then, for each input tree $\mathcal{T} = (T, \tau), \mathcal{T} \in \mathcal{L}(\mathcal{A}^q)$ \iff for every infinite path $\pi$ of $\mathcal{T}$ starting at the root, there is a word $\xi$ of the form $\xi = (\tau(\pi(0)), C_0), (\tau(\pi(1)), C_1), (\tau(\pi(2)), C_2),...$ , such that either $\xi \in \mathcal{L}(\mathcal{A}_{Q_i, q})$ or there is a a position $i = (\tau(\pi(i)), C_i)$ such that  
       $\mathcal{T}_{\pi(i+1)} \in \mathcal{L}(\mathcal{A}^{\bigvee C_{i+1}})$, 
       where $\mathcal{T}_{\pi(i+1)}$ is the subtree of $\mathcal{T}$ rooted at node $\pi(i+1)$.

        Note this about the above claim: it basically states that a node such that its subtree is in 
        $\mathcal{L}(\mathcal{A}^{\bigvee C_i})$ 
        \emph{releases} the path on which it is from the satisfaction of the constraint imposed by 
        $\mathcal{A}_{Q_i, q}$. Moreover, we saw above that by induction hypothesis we have for every 
        $C \in B_{Q_i}$ a \cCTL*[f] formula $\chi(C)$ such that 
        $\mathcal{L}(\mathcal{A}^{\bigvee C}) = \mathcal{L}(\chi(C))$. 
        Combining this with the fact above, we get the following claim about the tree language defined by $\mathcal{A}^q$. This characterization can be used to prove the correctness of the translation.

       \emph{Fact.} For each $\Sigma$-labelled tree $\mathcal{T}$, $\mathcal{T} \in \mathcal{L}(\mathcal{A}^q)$ \iff for every infinite path $\pi$, there is an infinite word $\xi$ over 2$^{AP_B}$ such that either $\xi \models \psi_B$, or there is $i \geq 0$ such that $\xi(i) \cap AP = \pi(i)$, for all $p_C \in \xi(i), \mathcal{T}, \pi, i \models \chi(C)$ and there is a unique $C \in B_{Q_i}$, such that $p_C \in \xi(i)$.

       Now, replace every $p_C$ and $\neg p_C$ in $\psi_\Gamma$ by the \cCTL*[f] formulas $\chi(C)$ and $\bigvee\nolimits_{C' \in \Gamma_{Q_i} \setminus \{C\}}\chi(C')$, respectively. 
       The formula thus constructed is denoted $f(\psi_B$). 
       We can finally define the formula $\chi(q)$:

       {

         \centering

       $\chi(q) = \A\left(f(\psi_B) ~\W~ \bigvee\nolimits_{C \in B_{Q_i}} \chi(C)\right)$

   }

       The formula basically states that on every path the satisfaction of the formula defined by the word automaton can only be released by the satisfaction of a 
       $\bigvee C$.
\end{proof}
\begin{proposition} \label{correctness translation of HWGT to logic}
    For any $q \in Q_i$, where $Q_i$ is a universal component, $\mathcal{L}(\mathcal{A}^q) = \mathcal{L}(\chi(q))$.
\end{proposition}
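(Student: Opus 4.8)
The plan is to prove Proposition~\ref{correctness translation of HWGT to logic} by induction on the order of the components, simultaneously with the construction given in the proof of Theorem~\ref{translation of HWGT in logic}. The base case is when $Q_i$ is a transient component with no outgoing atoms referring to $Q_i$ itself, in which case $\chi(q)$ is just a Boolean combination of the $\chi$'s of lower-order states and the correctness is immediate from the inductive hypothesis and the semantics of the Boolean connectives. For the inductive step, by the dualization argument mentioned in the proof it suffices to treat the universal case; the existential case follows by taking negations and swapping $\diamond/\square$, $\U/\R$, $\E/\A$, safety/co-safety throughout. So fix a universal component $Q_i$ and a state $q\in Q_i$, and assume that $\mathcal{L}(\mathcal{A}^{q'})=\mathcal{L}(\chi(q'))$ for every state $q'$ of lower order (hence $\mathcal{L}(\mathcal{A}^{\bigvee C})=\mathcal{L}(\chi(C))$ for every $C\in B_{Q_i}$, using that $\mathcal{A}^{\bigvee C}$ is the disjunction of the $\mathcal{A}^{q'}$ for $q'$ ranging over the states of the atoms in $C$).

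The heart of the argument is to justify the two \emph{Facts} stated in the proof and then to read off $\chi(q)$ from the second one. For the first Fact, I would unfold the definition of a successful run of $\mathcal{A}^q$ on $\mathcal{T}$: because $Q_i$ is a universal component of a hesitant automaton, the run restricted to $Q_i$ is essentially the (universal, tree-shaped) run of the UBA $\mathcal{A}_{Q_i,q}$ along every branch, where at each node the automaton either continues inside $Q_i$ via a $(\square,q'')$ atom or "exits downward" by committing to a clause $\bigwedge C$ whose atoms live in lower components; the remaining proof obligations for those atoms are exactly the requirement that the corresponding subtree $\mathcal{T}_{\pi(i+1)}$ lie in $\mathcal{L}(\mathcal{A}^{\bigvee C})$ for the chosen $C$. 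Matching up "the run is successful on this branch" with "either the branch stays in $Q_i$ forever (i.e.\ the annotated word is accepted by $\mathcal{A}_{Q_i,q}$) or it exits at some position $i$ into a released subtree" is the bookkeeping core of the first Fact; one has to be careful that, by the looping shape of $\mathcal{A}_{Q_i,q}$ and the weak acceptance condition, staying in $Q_i$ forever is accepting while exiting to $\mathrm{q_{exit}}$ is only allowed when a genuine release happens. The second Fact is then obtained from the first by (i) encoding the chosen annotations $C_j$ as fresh propositions $p_{C_j}$, so that the family of annotated branch-words becomes a family of $2^{AP_B}$-words, and (ii) replacing "accepted by $\mathcal{A}_{Q_i,q}$" by "satisfies $\psi_B$" using Lemma~\ref{equivalence of automata with safetyltl} together with the observation that $\mathcal{A}_B$ is still counter-free and looping, hence $\mathcal{L}(\mathcal{A}_B)=\mathcal{L}(\psi_B)$ for a \SafeLTL formula $\psi_B$.

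Next I would argue that the substitution step is sound, i.e.\ that $f(\psi_B)$ — obtained from $\psi_B$ by replacing each $p_C$ with $\chi(C)$ and each $\neg p_C$ with $\bigvee_{C'\neq C}\chi(C')$ — correctly captures, path by path, the condition "$\xi\models\psi_B$ for a suitable choice of annotations". The only subtlety here is that in a \SafeLTL formula negation appears only in front of propositions, so the substitution of $\neg p_C$ is well-defined and monotone; and the mutual exclusion property, which gives $\mathcal{L}(\chi(C))\cup\mathcal{L}(\chi(C'))=\mathcal{T}_\Sigma$ for distinct $C,C'$, is exactly what is needed to ensure that choosing which $p_C$'s hold at a node is consistent with "at most one $C$ is genuinely active", so that $f(\psi_B)$ does not over- or under-count. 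Finally, the passage from "for every infinite path $\pi$, some annotated word over that path satisfies $\psi_B$ or releases at some position" to the single formula $\chi(q)=\A\bigl(f(\psi_B)\W\bigvee_{C\in B_{Q_i}}\chi(C)\bigr)$ uses the standard identification of \SafeLTL truth with truth of the \cCTL*[f] formula along the (finite or infinite) path under the finite-path semantics, together with the weak-until reading: on every path, $f(\psi_B)$ must hold until — if ever — a release $\bigvee_C\chi(C)$ occurs, and if no release ever occurs then $f(\psi_B)$ holds forever, which is the "stay in $Q_i$" case. Conversely, any tree satisfying $\chi(q)$ yields, on each branch, the required annotated word and thus a successful run, giving the reverse inclusion.

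I expect the main obstacle to be the first Fact: faithfully translating "successful run of the hesitant tree automaton $\mathcal{A}^q$" into the mixed linear/branching statement about $\mathcal{A}_{Q_i,q}$ and the released subtrees, while correctly handling the interaction between the universal $\square$-transitions within $Q_i$ (which must hold at \emph{all} successors), the downward exit clauses (which pick \emph{some} successor configuration into a lower component), and the bound $k=1$ imposed by the hesitant constraint on atoms that stay inside $Q_i$. Getting the quantifier alternation and the "at every node of the branch we either continue or release" phrasing exactly right — and reconciling it with the weak acceptance condition so that infinite stays in $Q_i$ are accepting — is where the real content of the proof lies; the substitution step and the final reading of $\chi(q)$ are comparatively routine once Lemma~\ref{equivalence of automata with safetyltl} and the mutual exclusion property are in hand.
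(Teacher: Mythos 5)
Your proposal follows essentially the same route as the paper: induction on the order of the components, the two bridging \emph{Facts} relating successful runs of $\mathcal{A}^q$ to the linearised word automaton $\mathcal{A}_{Q_i,q}$ and the released subtrees, the passage to a \SafeLTL formula $\psi_B$ via the counter-free looping automaton $\mathcal{A}_B$ and Lemma~\ref{equivalence of automata with safetyltl}, the substitution $f(\psi_B)$ justified by the mutual exclusion property, and the final reading $\chi(q)=\A\bigl(f(\psi_B)\W\bigvee_{C}\chi(C)\bigr)$. The paper itself treats this proposition as the correctness statement of exactly that construction (embedded in the proof of Theorem~\ref{translation of HWGT in logic}), so your reconstruction, including the identification of the first Fact as the delicate step, matches it.
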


The translation from a \cCTL*[f] formula to a HWGT$_{cf}$ is basically the same provided in \cite[Theorem 5.9]{BBMP24}. 

\begin{theorem} \label{translation from logic to HWGT}
Given a \cCTL*[f] formula $\varphi$, one can construct a HWGT$_{cf}$ $\mathcal{A}_\varphi$ 
such that $\mathcal{L}(\varphi) = \mathcal{L}(\mathcal{A}_\varphi)$.
\end{theorem}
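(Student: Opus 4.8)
The plan is to adapt the standard top-down translation of $\CTL*$-style logics into alternating tree automata (as carried out in \cite[Theorem 5.9]{BBMP24}) to the counting modality $\D^n$ and to the finite-path quantification of $\cCTL*[f]$. First I would list the \emph{state subformulas} of $\varphi$ --- those of the form $p$, $\D^n\theta$, $\E\theta$, $\A\theta$, handling $\neg$ applied to such a subformula by dualisation --- and order them by the subformula relation. Processing them bottom-up, I would maintain the invariant that for every already-processed state subformula $\chi$ there is a state $q_\chi$ of the automaton under construction with $\mathcal{L}(\mathcal{A}^{q_\chi}) = \{\mathcal{T} : \mathcal{T} \models \chi\}$. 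An atom $p$ gives a transient state that reads the label; $\D^n\theta$ gives a transient state $q$ with $\delta(q,\sigma) = (\diamond_n, q')$, where $q'$ is the initial state of the component built for $\E\theta$ --- sound because $\cCTL*[f]$ has no past operators, so the truth of $\theta$ at a position $i{+}1$ depends only on the subtree hanging there, while the graded diamond $\diamond_n$ simultaneously enforces the ``$\pi$ does not end at $\pi(i)$'' and the ``$n$ distinct children'' requirements in the semantics of $\D^n$.

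The core case is $\E\theta$ (the case $\A\theta$ follows from $\E\neg\theta$ by the usual dualisation: swap $\diamond$/$\square$, exchange the existential and universal roles of the component, complement the weak acceptance condition, which is harmless since the weak acceptance of an HWGT is self-dual in the appropriate sense). For $\E\theta$, the path formula $\theta$ is built from Boolean connectives, $\X$, $\tilde{\X}$, $\U$, $\R$ and occurrences of lower-ordered state subformulas $\chi_1,\dots,\chi_m$ ($\D^n$-formulas being treated as such occurrences, since they are node-local). Using \Cref{ctlf equivalences}(2) to eliminate $\R$ in favour of $\U$, I would view $\theta$ as a formula over the finite alphabet $2^{\{\chi_1,\dots,\chi_m\}}$ interpreted over \emph{finite} words, matching the finite-path semantics. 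By the equivalence between $\LTL$ over finite words, $\FO$ over finite words and counter-free automata (\Cref{LTL and LTL+p are equivalent}, \cite{DG08a}), the set of finite prefixes witnessing $\theta$ is counter-free-recognisable, so the co-safety property ``\emph{some prefix witnesses $\theta$}'' is in $\CosafeLTL$, and by \Cref{equivalence of automata with safetyltl} it is recognised by a counter-free, looping, non-deterministic coBüchi word automaton. This automaton becomes a new existential component $Q_i$: along a guessed path it moves through $\diamond$-atoms inside $Q_i$, and it exits through an annotation $C$ --- a set of atoms over the $q_{\chi_j}$'s --- exactly when the corresponding state subformulas hold and license leaving. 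Since lower-ordered states are reached only via these exit-annotations, the component structure is weak and hesitant by construction, and the counter-free condition on each non-transient component holds because every component automaton comes, as above, from an $\LTL$ formula.

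It remains to ensure the \emph{mutual exclusion} property of \Cref{mutual exclusion}. For this I would use, as the alphabet of the $Q_i$-automaton, not the raw subformulas but the complete \emph{types} over $\chi_1,\dots,\chi_m$: at each node exactly one conjunction of literals $\bigwedge_j(\pm\chi_j)$ holds, and I index each exit-annotation $C \in B_{Q_i}$ by such a type. Then distinct $C \ne C'$ disagree on the sign of some $\chi_j$, and since both $\chi_j$ and $\neg\chi_j$ are among the processed state subformulas, $\mathcal{L}(\mathcal{A}^{q_{\chi_j}})$ and $\mathcal{L}(\mathcal{A}^{q_{\neg\chi_j}})$ are complementary --- exactly what \Cref{mutual exclusion} requires. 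Correctness, $\mathcal{L}(\mathcal{A}_\varphi)=\mathcal{L}(\varphi)$, is then proved by induction on the component order, using at the (co-)safety step the ``\emph{release}'' characterisation spelled out in the proof of \Cref{translation of HWGT in logic}, now read backwards.

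The main obstacle is carrying out the three adaptations simultaneously: keeping each component word automaton \emph{genuinely} counter-free as it is synthesised from $\theta$ --- rather than only after a determinisation that could spoil counter-freeness --- while also imposing the type refinement that yields mutual exclusion without blowing up across the hesitant layering. The second delicate point is the bookkeeping around $\tilde{\X}$ and path endpoints: one must check that the finite-word $\LTL$ translation of $\theta$, and the resulting looping automaton, really capture ``some prefix of the path is a finite path satisfying $\theta$'', so that the finite-path semantics is honoured both at proper positions and at path ends. Everything else is the routine bottom-up construction of \cite[Theorem 5.9]{BBMP24}.
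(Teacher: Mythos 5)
Your proposal is correct and follows essentially the same route as the paper, which for this direction simply defers to the construction of \cite[Theorem 5.9]{BBMP24}: a bottom-up induction on state subformulas that spawns one hesitant component per path quantifier, uses the \FO/counter-free correspondence on words to keep each component aperiodic, and uses complete types over the lower-ordered state subformulas to secure mutual exclusion. The details you spell out (the $\diamond_n$ treatment of $\D^n$, the reduction of finite-path quantification to (co-)safety prefix properties, and the type refinement) are exactly the adaptations the paper's citation implicitly relies on.
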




\section{Normal Forms of Temporal Logics}
In the previous sections, we have proved the equivalence of the two logics considered in Section 3, i.e., \polcCTLp and \cCTL*[f], with two classes of automata. In particular, thanks to Proposition \ref{ctlf equivalences} and the class of automaton proven equivalent to \cCTL*[f], it was possible to highlight the \emph{semantic} behaviour of the latter. Namely, whenever an existential path quantification is involved, a \cCTL*[f] formula can only express a co-safety property, while, dually, whenever a universal path quantification is involved, it can only express a safety property. These observations give rise to the following normal form, that captures \emph{syntactically} the \emph{semantic} content provided by the finite path quantification.

\begin{lemma}
    For any \cCTL*[f] formula, there is an equivalent formula generated by the grammar
    \vspace{-1.5ex}
    \setlength{\jot}{0.25pt}
\begin{align*}
    \varphi \;&::=\; p
    \;\mid\; \neg \varphi
    \;\mid\; \varphi \lor \varphi
    \;\mid\; {\D^n\varphi}
    \;\mid\; {\E\psi}
  \\
  \psi \;&::=\; \varphi
    \;\mid\; \psi \lor \psi
    \;\mid\; \psi \land \psi
    \;\mid\; {\X \psi}
    \;\mid\; {\psi \U \psi}
\end{align*}

\end{lemma}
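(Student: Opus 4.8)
The idea is to take any \cCTL*[f] formula, translate it to an equivalent HWGT$_{cf}$ (\Cref{translation from logic to HWGT}), and then re-read the translation of that automaton back into \cCTL*[f] (\Cref{translation of HWGT in logic}), observing that the output always lands in the fragment generated by the displayed grammar. So really the task is to inspect the shape of the formula $\chi(q)$ produced by the proof of \Cref{translation of HWGT in logic} and verify it matches the grammar — the composition of the two translations is a normalisation procedure.

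\textbf{Step 1: shape of the translation output.} I would argue by induction on the order of the components of the HWGT$_{cf}$ that each $\chi(q)$ is in the claimed normal form. For a transient component, $\chi(q)$ is a Boolean combination (with $\top,\bot$) of translations $\chi(q')$ of lower components together with counting atoms $(\diamond_k,q')$ and $(\square_k,q')$, which become $\D^k\chi(q')$ and $\neg\D^k\neg\chi(q')$; all of these are $\varphi$-productions of the grammar (Boolean closure, including negation, is explicitly available, and $\D^n\varphi$ is a production). For an existential component, $\chi(q)$ has the form $\E(\dots)$ where the argument is built from the \CosafeLTL formula $\psi_B$ (which, by the co-safe fragment, uses only $p$, $\land$, $\lor$, $\X$, $\U$ over its atoms — exactly the $\psi$-productions) with each atom $p_C$ replaced by a $\varphi$-level formula $\chi(C)$ or a disjunction of such; since the $\psi$-grammar allows $\varphi$ as a base case, substituting $\varphi$-formulas for the atoms keeps us inside the $\psi$-grammar, and wrapping with $\E$ gives a $\varphi$-formula. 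The universal case is dual: $\chi(q)=\A(f(\psi_B)\W\bigvee_C\chi(C))$ with $\psi_B$ a \SafeLTL formula; here I would rewrite $\A$ as $\neg\E\neg$ and push the negation inward, using De Morgan and the \cCTL*[f] dualities ($\neg\X\neg = \tilde\X$, $\neg(\psi\U\psi')$ as $\R$, $\neg\W$ in terms of $\U$, plus $\neg\D^n\neg$) so that the negated safety body turns into a positive co-safety body under an $\E$. The key point is that a negated \SafeLTL formula is a \CosafeLTL formula up to this rewriting, so after normalisation the universal clause also becomes an $\E(\psi)$-shaped $\varphi$-formula, with the $\chi(C)$'s (lower components, hence already in normal form by IH) sitting at the $\varphi$-leaves.

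\textbf{Step 2: assembling.} Combining Steps — given an arbitrary \cCTL*[f] formula $\varphi$, apply \Cref{translation from logic to HWGT} to get $\mathcal{A}_\varphi$, then \Cref{translation of HWGT in logic} to get $\varphi_{\mathcal{A}_\varphi}$ with $\mathcal{L}(\varphi)=\mathcal{L}(\varphi_{\mathcal{A}_\varphi})$, and by Step 1 the latter is in the grammar. Equivalence of languages of models is exactly the equivalence of formulas (two formulas are equivalent iff they have the same models, by the initial-equivalence definition), so we are done.

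\textbf{Main obstacle.} The delicate part is the universal component: the raw output $\A(f(\psi_B)\W\bigvee_C\chi(C))$ uses $\A$ and $\W$, neither of which appears in the target grammar. One must genuinely push negation through the \SafeLTL body and through $\W$ to expose an $\E$ over a co-safety (hence $\psi$-grammar) formula; here it is essential to invoke the fact that $\neg$ of a safety formula is a co-safety formula and that \Cref{ctlf equivalences} is not even needed because \SafeLTL/\CosafeLTL duality is purely at the \LTL level. A secondary subtlety is making sure the substituted subformulas $\chi(C)$ — which by induction are in normal form and appear only positively at the $\varphi$-level — are not placed under a negation that would require re-normalising them; since negation is a $\varphi$-production this is harmless, but one should check that the substitution $f(\cdot)$ keeps $\chi(C)$ out of the scope of any path quantifier other than the outermost $\E$, which is indeed the case by construction of $f(\psi_B)$.
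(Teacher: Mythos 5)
Your overall route is the one the paper relies on: the lemma is read off from the round trip through HWGT$_{cf}$ (\Cref{translation from logic to HWGT} followed by the construction in the proof of \Cref{translation of HWGT in logic}), and your inductive inspection of the shape of $\chi(q)$ for transient, existential and universal components is the right way to organise that reading.

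The one genuine gap is your claim that \Cref{ctlf equivalences} ``is not even needed because \SafeLTL/\CosafeLTL duality is purely at the \LTL level''. That duality is purely syntactic only over infinite words, where $\X$ is self-dual. In \cCTL*[f] the dual of $\X$ is $\tilde{\X}$, so when you rewrite $\A\left(f(\psi_B) \W \bigvee_{C}\chi(C)\right)$ as $\neg\E\neg(\cdots)$ and push the negation through the \SafeLTL body, you produce $\tilde{\X}$, $\R$ and the negation of $\W$ --- none of which is a production of the target $\psi$-grammar, which admits no negation, no $\tilde{\X}$ and no $\R$ at the path level. Discharging these residues is exactly what the finite-path equivalences of \Cref{ctlf equivalences} are for: $\varphi\R\psi \leftrightarrow \alpha^{\R}_{\varphi,\psi}$ turns release into until, and $\E\tilde{\X}\varphi\leftrightarrow\top$ (together with the fact that the outermost quantifier ranges over finite paths, which can be cut short at will) is what lets the surviving $\tilde{\X}$ subformulas be eliminated rather than merely renamed. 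This is precisely why the paper credits that proposition, alongside the automaton characterisation, for the normal form. Once those equivalences are invoked the universal case closes; without them it does not land in the stated grammar. Your secondary observation --- that the $\chi(C)$'s substituted by $f(\cdot)$ sit at the $\varphi$-level and tolerate negation --- is correct and worth keeping.
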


Note that this grammar allows to state that \E \xspace is only followed by \CosafeLTL and, by the use of negation, that \A \xspace is only followed by \SafeLTL. Moreover, the difference between finite and infinite path quantification becomes redundant. Indeed, every \emph{finite} path property can also be seen as an \emph{infinite} path property and vice versa. This implies that the normal form of \cCTL*[f] is nothing else than a \emph{polarized} version of \cCTL*, that we will denote by \polcCTL*, creating a symmetry with Schlingloff's work and also showing that the semantic content provided by finite path quantification is useless when one restricts the syntax as above. To conclude, the following is well known.
\begin{proposition} \cite{CMP92}
    \SafeLTL~(resp., \CosafeLTL) and \LTL(p) formulas of the form $\G\varphi$ (resp., $\F\varphi)$, where $\varphi$ is a formula using only past temporal operators, are equivalent formalisms.
\end{proposition}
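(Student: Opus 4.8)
The plan is to prove both directions of the equivalence separately, exploiting the duality between the safe and co-safe fragments so that it suffices to establish one of the two claims (say the \CosafeLTL/$\F$ case) and obtain the other by taking negations. Recall that \SafeLTL and \CosafeLTL are the fragments of \LTL built from atomic propositions, the Boolean connectives other than negation, the next operator $\X$, and the release $\R$ (resp.\ until $\U$) operator; and that these capture exactly the safety (resp.\ co-safety) languages definable in \LTL. The target on the right-hand side is the class of formulas $\F\varphi$ (resp.\ $\G\varphi$) where $\varphi$ uses only past temporal operators, namely $\Y$ and $\S$. The key conceptual point is that a co-safety property is one whose satisfaction is \emph{witnessed at some finite prefix}: once a good prefix is seen, the property holds regardless of the future. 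The formula $\F\varphi$ with $\varphi$ pure-past says precisely ``at some position $i$, the pure-past condition $\varphi$, which inspects only positions $0,\dots,i$, holds'', i.e.\ ``some finite prefix satisfies $\varphi$''. This matches the informal meaning of co-safety exactly, which is why the statement is plausible.

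First I would establish the inclusion $\F(\text{pure-past}) \subseteq \CosafeLTL$, which amounts to showing every such formula defines a co-safety language \LTL-definable in the co-safe fragment. The cleanest route is to invoke the automaton-theoretic characterisation already available in the excerpt: a co-safety \LTL language is recognised by a looping counter-free nondeterministic coB\"uchi automaton (\Cref{equivalence of automata with safetyltl}), equivalently by a finite set of good prefixes $F \subseteq \Sigma^*$ with language $F\Sigma^\omega$. Since $\varphi$ is pure-past, its truth at position $i$ depends only on the prefix $w[0..i]$; hence $\F\varphi$ holds on $w$ iff some prefix of $w$ lies in the (star-free, by counter-freeness) language of prefixes satisfying $\varphi$. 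This is manifestly of the form $F'\Sigma^\omega$ with $F'$ a star-free/\FO-definable prefix language, so it is co-safety and \LTL-definable, and therefore lies in \CosafeLTL by the characterisation.

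For the converse inclusion $\CosafeLTL \subseteq \F(\text{pure-past})$, the plan is to use the separation theorem of Gabbay et al.\ (\Cref{LTL and LTL+p are equivalent}, and the underlying expressive completeness of \LTL(p) for \FO) together with the fact that \CosafeLTL defines exactly the \FO-definable co-safety word languages. Given a co-safe \LTL formula $\theta$, the language $\mathcal L(\theta)=F\Sigma^\omega$ has a star-free, hence \FO-definable, good-prefix set $F$. By the expressive completeness of pure-past \LTL for \FO over finite words (a consequence of \Cref{LTL and LTL+p are equivalent} applied to finite words), there is a pure-past formula $\varphi$ such that, at each position $i$ of an infinite word, $\varphi$ holds iff the prefix up to $i$ belongs to $F$. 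Then $\F\varphi$ holds iff some prefix is good iff the word is in $F\Sigma^\omega=\mathcal L(\theta)$, giving $\theta \equiv \F\varphi$. The safe/$\G$ case follows by the dualities $\neg\CosafeLTL=\SafeLTL$, $\neg\F\varphi \equiv \G\neg\varphi$, and the closure of the pure-past fragment under negation.

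The main obstacle, and the step deserving the most care, is the precise \emph{prefix-based} formulation linking co-safety to a finite good-prefix language and its \FO/star-free definability: one must argue that the good-prefix language of a \CosafeLTL formula is exactly \FO-definable over finite words (not merely that the $\omega$-language is \FO over infinite words), so that pure-past \LTL can capture it positionally. This requires being careful that the counter-free (equivalently, aperiodic/star-free) condition transfers correctly from the $\omega$-automaton to its finite good-prefix acceptor, and that the standard expressive-completeness result for pure-past \LTL is applied to the right class of finite-word languages. Once this bridge is in place, both inclusions are routine, and the claim is essentially folklore as the citation \cite{CMP92} indicates.
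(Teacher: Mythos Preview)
The paper does not provide its own proof of this proposition: it simply cites it as a known result from \cite{CMP92} and immediately uses it to derive the normal form for \polcCTLp. Your sketch is a correct and standard argument for this classical result, and your identification of the delicate step --- that the good-prefix language of a \CosafeLTL formula is star-free over finite words, so that pure-past \LTL can capture it --- is exactly right.
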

This suggests a normal form also for \polcCTL(p). Since \polcCTL(p) and \CTL[f] are equivalent formalisms, 
\polcCTL(p) can express co-safety properties existentially and safety properties universally. Combining this with the proposition above, we get the following normal form for \polcCTL(p).

\begin{lemma}
    For any \polcCTL(p) formula, there is an equivalent formula generated by the grammar
    \vspace{-1.5ex}
    \setlength{\jot}{0.25pt}
\begin{align*}
    \varphi \;&::=\; p
    \;\mid\; \neg \varphi
    \;\mid\; \varphi \land \varphi
    \;\mid\; {\D^n\varphi}
    \;\mid\; {\E\F\psi}
  \\
  \psi \;&::=\; \varphi
    \;\mid\; \psi \lor \psi
    \;\mid\; \psi \land \psi
    \;\mid\; {\Y \psi}
    \;\mid\; {\psi \S \psi}
\end{align*}
\end{lemma}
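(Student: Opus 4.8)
The plan is to derive the normal form for \polcCTL(p) by chaining together three facts already established in the paper: (i) the equivalence \polcCTL(p) $\equiv$ \cCTL*[f] (the content of \Cref{translation of logic to 2HLGT,translation of HWGT in logic,translation from logic to HWGT} via the two automaton classes); (ii) the normal form for \cCTL*[f] just proved in the previous lemma, which puts every formula into a shape where \E\xspace is applied only to \CosafeLTL path formulas and \A\xspace (obtained by negating \D-free Boolean combinations) only to \SafeLTL path formulas; and (iii) the Cameron--Manna--Pnueli-style proposition, stating that \SafeLTL (resp.\ \CosafeLTL) path formulas coincide with \LTL(p) formulas of the shape $\G\varphi$ (resp.\ $\F\varphi$) whose $\varphi$ uses only past temporal operators. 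The target grammar is then exactly what one obtains by substituting (iii) into (ii) and re-expressing the resulting \cCTL*[f] formula inside \polcCTL(p).

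First I would start from an arbitrary \polcCTL(p) formula, pass to an equivalent \cCTL*[f] formula by the equivalence of the two logics, and bring the latter into the \polcCTL* normal form of the preceding lemma. In that normal form the only path-quantified subformulas are of the form $\E\psi$ with $\psi$ a \CosafeLTL formula over state subformulas; the dual $\A$ arises only through outermost negations, and dualising shows that the complementary shape is $\A\psi'$ with $\psi'$ a \SafeLTL formula. Now I would apply the proposition above to each such $\psi$: a \CosafeLTL formula is equivalent to $\F\vartheta$ for a purely-past $\vartheta$, and symmetrically a \SafeLTL formula is equivalent to $\G\vartheta'$ for a purely-past $\vartheta'$. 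Substituting, every existentially quantified block becomes $\E\F\vartheta$ with $\vartheta$ built from state subformulas using only $\Y$, $\S$ and Boolean connectives, and every universally quantified block becomes $\A\G\vartheta' = \neg\E\F\neg\vartheta'$; since the $\psi$-layer of the target grammar is closed under $\lor$, $\land$, $\Y$, $\S$ and the negation is absorbed by pushing it through the past-only formula (past \LTL is closed under negation, and one can always push negations down to the atoms, at the cost of Boolean restructuring that the grammar allows), the universal case collapses into a $\neg\E\F(\cdot)$ at the $\varphi$-level, which is precisely what the rule $\varphi ::= \neg\varphi \mid \E\F\psi$ provides. The counting operators $\D^n$ are untouched throughout and sit at the $\varphi$-level as in the source grammar.

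The main obstacle I expect is the bookkeeping around negation and the interface between the two layers: the \polcCTL* normal form is stated with $\lor$ and $\D^n$ at the state level and with $\X$, $\U$ at the path level, so one must check that, after applying the (co)safety$=$past-$\F$/$\G$ translation, \emph{all} occurrences of $\X$ and $\U$ disappear in favour of $\Y$, $\S$ and a single outermost $\F$ (resp.\ $\G$) — this is exactly the content of the cited proposition, but one has to make sure it is applied to the path formula \emph{as a whole}, not compositionally, and that nested $\E$'s inside a \CosafeLTL formula (which occur as state subformulas $\varphi$) are handled by the induction, so the purely-past body $\vartheta$ has state subformulas that are themselves already in normal form. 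The second delicate point is that the target $\psi$-grammar is not closed under negation, so when we dualise $\A\G\vartheta'$ we must push the negation \emph{inside} $\vartheta'$ down to the state subformulas before we reach a $\psi$-node; this is legitimate because past \LTL admits negation normal form with dual operators, but since the grammar only lists $\Y$ and $\S$ (not their duals $\tilde\Y$ and $\R_{\text{past}}$), one should either argue that over trees/paths starting at the root the relevant past formulas can be taken in a restricted positive shape, or—more robustly—note that the lemma only claims existence of an equivalent formula in the grammar, so it suffices to exhibit, for the negated purely-past body, \emph{some} equivalent purely-past positive-ish formula, which standard expansion laws for $\S$ and $\Y$ together with the finiteness of the past provide. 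Modulo these two checks, the proof is a direct substitution argument with an induction on the structure (equivalently, on the component order of the corresponding HWGT$_{cf}$).
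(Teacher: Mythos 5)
Your proposal follows exactly the paper's route: the paper derives this normal form by combining the equivalence of \polcCTL(p) with \cCTL*[f], the \polcCTL* normal form of the preceding lemma, and the Chang--Manna--Pnueli-style proposition that \SafeLTL~(resp.\ \CosafeLTL) formulas are captured by $\G\varphi$ (resp.\ $\F\varphi$) with $\varphi$ purely past --- precisely your ingredients (i)--(iii), assembled by the same substitution argument. The negation-handling subtleties you flag (the $\psi$-layer not being closed under negation, and the absence of the duals of $\Y$ and $\S$ from the grammar) are real but are not addressed in the paper either, which states the combination without further detail.
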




\section{Conclusions}

In this work, we provided two automaton-based characterisations of the temporal
logics \polcCTL(p) and \cCTL*[f], both of which are known to be equivalent to \FO over
infinite trees.
These gives us two corresponding characterisations of \FO.
The automata-theoretic perspective reveals two distinctive features of \FO in
this setting:
\begin{inparaenum}[(a)]
\item
  when expressing existential properties over paths, it can capture only
  co-safety properties of the node sequences along those paths, whereas
  universal path quantification allows it to express only safety properties;
%
%
\item
  every formula can be normalised into a Boolean combination of formulae where
  only the variable bound to the outermost quantifier is independent, while all
  others depend on the variable(s) quantified first.
\end{inparaenum}
These insights were obtained by establishing corresponding normal forms for
\emph{\polcCTL(p)} and \emph{\cCTL*[f]}, each derived from its associated class of
automata.

Despite these advancements, several open problems remain.
First, while~\cite{Sch92a} shows that \polcCTLp is equivalent to \FO,
and~\cite{MR03} establishes that \cCTL* corresponds to \MPL, no analogous
result is known for \cCTL(p).
As shown in \Cref{CTL+p more expressive than polctlp}, \cCTL(p) is strictly more
expressive than \polcCTLp, while it is well known that it is strictly less
expressive than \cCTL*.
Determining the exact expressive power of \cCTL(p) remains an important open
question.
Second, although the two classes of automata we introduced are equivalent,
current translations between them passes through \FO, which leads to a
non-elementary blowup.
An interesting direction for future work would be to develop direct translations
between the two automata classes, bypassing the intermediate \FO encoding.
Third, we observe a strong similarity between the normal form of \polcCTLp and
the logic studied in~\cite{Boj09}.
While \polcCTLp supports both counting and the \S \xspace operator, the logic
in~\cite{Boj09} includes only the past-time version of \F.
Importantly, the definability problem for that logic is decidable over finite
trees.
Investigating whether the same holds for \polcCTLp in the finite-tree setting
would be a valuable contribution.
Finally, a more immediate objective is to prove the conjectured equivalence
between HLGTs and Antisymmetric Additive-Weak Parity Automata.

\bibliographystyle{eptcs}
\bibliography{References}
\end{document}